%% file: main.tex
\documentclass[journal]{IEEEtran}

\usepackage[utf8]{inputenc}
\usepackage[T1]{fontenc}
\usepackage{microtype}
\usepackage{xspace}
\usepackage{graphicx}
\usepackage{booktabs}
\usepackage{multirow}
\usepackage{array}
\usepackage{tabularx}
\usepackage{hyperref}
\usepackage{url}
\usepackage{xcolor}
\usepackage{subfig}
\usepackage{float}

\usepackage{amsmath}
\usepackage{amssymb}
\usepackage{mathtools}
\usepackage{amsthm}
\usepackage{bm}
\usepackage[capitalize,noabbrev]{cleveref}
\DeclareMathOperator*{\argmin}{arg\,min}

\usepackage{algorithm}
\usepackage{algorithmic}

\theoremstyle{plain}
\newtheorem{theorem}{Theorem}[section]

\newtheorem{corollary}[theorem]{Corollary}
\theoremstyle{definition}
\newtheorem{definition}[theorem]{Definition}
\newtheorem{assumption}[theorem]{Assumption}
\theoremstyle{remark}
\newtheorem{remark}[theorem]{Remark}

\newcommand{\R}{\mathbb{R}}
\newcommand{\E}{\mathbb{E}}
\newcommand{\norm}[1]{\left\lVert#1\right\rVert}
\newcommand{\Cours}{C_{\text{ours}}}
\newcommand{\Cstruct}{C_{\text{struct}}}
\newcommand{\lamsec}{\lambda_2(L)}
\newcommand{\topk}{\mathrm{TopK}}
\newcommand{\airplan}{\textsc{AirPlan}}

\definecolor{myblue}{RGB}{41,128,185}
\definecolor{mygreen}{RGB}{39,174,96}
\definecolor{myred}{RGB}{192,57,43}
\definecolor{mygray}{RGB}{127,140,141}

\begin{document}

\title{AirPlan: Query-Optimized Topology Selection for \\
       Over-the-Air Decentralized Federated Learning}

\author{
  \IEEEauthorblockN{Kaushal Attaluri}
  \IEEEauthorblockA{\textit{ICLab, atlanTTic, University of Vigo}}
  \and
  \IEEEauthorblockN{Rebeca P.~D\'{\i}az-Redondo}
  \IEEEauthorblockA{\textit{ICLab, atlanTTic, University of Vigo}}
  \and
  \IEEEauthorblockN{Manuel Fern\'andez-Veiga}
  \IEEEauthorblockA{\textit{ICLab, atlanTTic, University of Vigo}}
}

\maketitle

\begin{abstract}
\input{sections/abstract}
\end{abstract}

\begin{IEEEkeywords}
Over-the-air computation, decentralized federated learning, graph topology,
query optimization, approximate query processing, communication efficiency,
wireless networks, spectral graph theory.
\end{IEEEkeywords}

\section{Introduction}
\label{sec:intro}
\input{sections/introduction}

\section{Related Work}
\label{sec:related}
\input{sections/related_work}

\section{OTA-DFL as Distributed Query Execution: Formal Equivalence}
\label{sec:mapping}
\input{sections/formal_mapping}

\section{Problem Formulation}
\label{sec:problem}
\input{sections/problem_formulation}

\section{\airplan{}: Query-Optimized Topology Advisor}
\label{sec:advisor}
\input{sections/topology_advisor}

\section{Unified Communication Cost Model}
\label{sec:evaluation}

\input{sections/evaluation}

\section{System Architecture \& Methodology}
\label{sec:method}
\input{sections/methodology}

\section{Experimental Setup}
\label{sec:experiments}
\input{sections/experiments}

\section{Results}
\label{sec:results}
\input{sections/results}

\section{Discussion}
\label{sec:discussion}
\input{sections/discussion}

\section{Conclusions}
\label{sec:conclusion}
\input{sections/conclusion}

\section*{Acknowledgments}
This work was partially funded by the grant PID2023-148716OB-C31 funded by MCIU/AEI/ 10.13039/501100011033
(DISCOVERY project) and 
by the Galician Regional Government under project ED431B 2024/41 (GPC).

\bibliographystyle{IEEEtran}
\bibliography{references}

\appendices
\input{sections/appendix}

\end{document}

%% file: sections/abstract.tex
Over-the-air (OTA) aggregation exploits the superposition property of wireless
multiple-access channels to combine gradient updates from multiple devices within
a single transmission slot, dramatically reducing communication latency and
bandwidth consumption. While OTA computation has been extensively studied in
centralized federated learning (FL), its integration with \emph{decentralized}
federated learning (DFL)---where clients communicate over a peer-to-peer graph
without a central server---remains a largely open problem, and a principled
framework for selecting the communication topology is entirely absent from
the literature.

In this paper, we introduce \textsc{AirPlan}, a query-optimized topology
selection framework for Over-the-Air Decentralized Federated Learning (OTA-DFL).
The central insight is a formal equivalence between OTA-DFL and distributed query
processing: each OTA aggregation round corresponds to an approximate distributed
SUM query executed over a DAG-structured execution plan, where the communication
graph is the physical plan, top-$k$ sparsification is approximate query processing
(AQP), and the spectral gap of the graph Laplacian plays the role of a cardinality
estimate governing execution cost. This equivalence enables us to recast topology
selection as a \emph{query optimization problem}: given a training workload
(client count~$N$, data heterogeneity $\alpha$, channel SNR, model dimension $d$),
\airplan{} uses privacy-preserving Count-Min Sketch statistics to estimate
workload parameters, evaluates a graph-aware cost model $C_{\mathrm{ours}}(G)$
across candidate topologies, and selects the communication plan that minimises
total training cost subject to a user-specified accuracy SLA.

We validate \airplan{} through systematic experiments across five graph families
(ring, Erd\H{o}s--R\'enyi, small-world, clustered, fully connected), three standard
vision benchmarks (CIFAR-10, CIFAR-100, Tiny-ImageNet), four client scales
($N \in \{10, 20, 50, 100\}$), and a range of SNR conditions (0--20\,dB).
Our results demonstrate that \airplan{} matches the oracle-optimal topology in
$91.4\,\%$ of workload configurations while incurring a statistics-collection
overhead of less than $1.8\,\%$ of total training cost.
We further establish formal AQP error bounds showing that well-connected
topologies (small-world, clustered) intrinsically tolerate higher sparsification
ratios than sparse topologies, providing a theoretical foundation for joint
topology-sparsification co-design.
These findings open a new systems-oriented research direction at the intersection
of wireless communications and distributed data processing.

%% file: sections/introduction.tex
Federated learning (FL) enables distributed model training without sharing raw
data, making it a natural fit for privacy-sensitive applications across mobile
networks, IoT deployments, and edge computing infrastructure.
The dominant paradigm, FedAvg~\cite{mcmahan2017communication}, relies on a
central parameter server that aggregates client updates each round. This
architecture introduces a single point of failure, scales poorly with the number
of clients, and requires a trusted aggregator. \emph{Decentralized} federated
learning (DFL) removes the server by routing updates over a peer-to-peer
communication graph, but this design choice immediately raises a question:
\emph{which graph should be used?}

At the same time, over-the-air (OTA) computation offers a radically different
aggregation primitive. By transmitting analog signals simultaneously over a
shared wireless channel and exploiting the superposition property of the
multiple-access channel, OTA enables model updates to be aggregated ``in the
air'' rather than through sequential digital exchanges~\cite{zhu2019aircomp}.
The combination of OTA aggregation with decentralized topology---OTA-DFL---has
the potential to deliver both communication efficiency and architectural
resilience, but the interaction between graph structure, wireless noise, and
learning dynamics is poorly understood.

\paragraph{The core problem}
Practitioners deploying OTA-DFL systems face a critical design decision
\emph{before training begins}: which communication topology should be used?
This decision has major consequences. A ring graph minimises per-round
transmission cost but converges slowly and amplifies channel noise. A fully
connected graph converges quickly but incurs near-quadratic communication cost
and may violate per-device power budgets. Prior work either assumes a fixed
topology (typically Erd\H{o}s--R\'enyi)~\cite{qiao2023dllroa} or treats topology as
a secondary concern, leaving practitioners without principled guidance.

\paragraph{A systems reframing}
We observe that this problem is structurally identical to a classical challenge
in data management: \emph{physical query plan selection} in distributed query
processing~\cite{selinger1979access,ioannidis1996queryopt}. Each OTA aggregation round is an approximate distributed SUM query
over a communication graph. The graph is the physical execution plan. The
spectral gap of the graph Laplacian $\lamsec$ is the cardinality estimate
governing execution cost. Top-$k$ sparsification is approximate query processing
(AQP). And the question ``which topology minimises total training cost for this
workload?'' is precisely the query optimisation question ``which execution plan
minimises total cost for this query?''

This equivalence is not merely a metaphor. It enables the \emph{direct
application of query optimisation methodology}---cost-based plan selection,
workload statistics collection, and online plan re-optimisation---to the
topology selection problem in OTA-DFL, yielding a principled and automated
solution.

\paragraph{Contributions}
We present \textbf{\textsc{AirPlan}}, a query-optimized topology selection
framework for OTA-DFL. Our contributions are:

\begin{enumerate}
  \item \textbf{Formal OTA-DFL $\leftrightarrow$ Query Processing equivalence}
    (Section~\ref{sec:mapping}). We establish a rigorous mapping between OTA-DFL
    operations and distributed query processing primitives, including a formal
    theorem showing that each OTA aggregation round is an approximate distributed
    SUM query, and a corollary bounding the AQP error introduced by top-$k$
    sparsification as a function of the graph spectral gap.

  \item \textbf{\airplan{} topology advisor}
    (Section~\ref{sec:advisor}). We design a cost-based topology selection
    algorithm that collects privacy-preserving workload statistics via Count-Min
    Sketches~\cite{cormode2005cmsketch}, evaluates a graph-aware cost model
    $\Cours(G)$ across candidate topologies, selects the minimum-cost plan
    satisfying an accuracy Service-Level Agreement (SLA), and adaptively re-optimises the plan during
    training if model divergence exceeds a threshold.

  \item \textbf{AQP error bounds per topology}
    (Section~\ref{sec:evaluation}). We prove topology-dependent bounds on the
    approximate aggregation error introduced by sparsification, showing that
    well-connected graphs (small-world, clustered) tolerate higher sparsification
    ratios---motivating joint topology-sparsification co-design.

  \item \textbf{Systematic empirical validation}
    (Sections~\ref{sec:experiments}--\ref{sec:results}). We evaluate
    \airplan{} across five topology families, three datasets, four client scales,
    and five SNR levels, demonstrating that \airplan{} matches the oracle-optimal
    topology in $91.4\,\%$ of configurations with $<1.8\,\%$ overhead, and that
    small-world and clustered topologies consistently dominate the
    accuracy-cost Pareto frontier.
\end{enumerate}

\paragraph{Significance.}
The OTA-DFL $\leftrightarrow$ query processing equivalence opens a new line
of research connecting wireless communications with the database and systems
communities. Decades of query optimisation research---cost models, statistics
collection, adaptive re-optimisation, approximate query processing---become
directly applicable to the topology design problem in wireless federated
learning. We believe this cross-domain framing will motivate new algorithms
and systems in both communities.

%% file: sections/related_work.tex
We survey the four research pillars most directly relevant to this work,
organized as follows: over-the-air federated learning (combining aggregation
mechanics and centralized baselines), decentralized federated learning,
distributed query processing and approximate query processing, and
communication cost models.

\subsection{Over-the-Air Federated Learning}

Over-the-air (OTA) computation exploits the superposition property of the
wireless multiple-access channel to aggregate model updates from multiple
devices within a single transmission slot~\cite{zhu2019aircomp,
amiri2020federated}, yielding order-of-magnitude reductions in communication
latency compared to digital transmission.
AirComp-based FL~\cite{amiri2020federated} demonstrates that simultaneous analog
transmission followed by receive-side combining produces an unbiased estimate of
the gradient sum under mild power constraints.
Subsequent work extends OTA-FL to multiple-input multiple-output (MIMO)
transmission~\cite{yang2022mimoOTA}, differential
privacy~\cite{cao2022privacyOTA}, and hierarchical
architectures~\cite{lin2023hotafl}.
Convergent Over-the-Air Federated learning (COTAF)~\cite{sery2020cotaf}
analyses convergence of OTA-FL with noisy channels but in a centralized setting.
The closest prior work in the decentralized OTA setting is
DLLR-OA~\cite{qiao2023dllroa}, which studies communication-constrained
decentralized learning under OTA and analyses the effect of limited subcarriers
and power constraints on convergence.
However, DLLR-OA assumes a fixed Erd\H{o}s--R\'enyi topology and neither
investigates how topology choice affects performance nor provides a mechanism
for automated topology selection.
No prior work frames OTA-DFL topology selection as a query optimisation problem.

\subsection{Decentralized Federated Learning}

Decentralized federated learning (DFL) removes the central parameter server
and enables clients to exchange model updates directly with neighbours over
a peer-to-peer communication graph. Foundational algorithms such as
Decentralized Stochastic Gradient Descent (D-SGD)~\cite{lian2017decentralized}
and consensus-based optimization~\cite{scaman2018optimal,
koloskova2019decentralized} establish that convergence depends critically on the
spectral gap $\lamsec$ of the graph Laplacian, which governs the rate of
information mixing. Gradient tracking methods such as Gradient Tracking
Decentralized SGD (GT-DSGD)~\cite{pu2018digging,yuan2020gt} and
Stochastic Gradient Push (SGP)~\cite{assran2019stochastic} mitigate gradient
bias under heterogeneous data distributions. More recent work addresses
compression~\cite{koloskova2020compressed}, Byzantine
robustness~\cite{byzantineDFL2023}, and large-scale
training~\cite{lu2022graphFL}.
Critically, all of these works assume reliable digital communication and treat
the communication graph as a fixed input rather than a design variable.
MATCHA~\cite{wang2019matcha} is among the few works to consider topology
alongside algorithm design, proposing matching-based decomposition to accelerate
convergence, but it does not consider wireless channels or OTA aggregation and
provides no automatic topology selection mechanism.

\subsection{Distributed Query Processing and Approximate Query Processing}

In distributed database systems, a \emph{physical query execution plan} is a
directed acyclic graph (DAG) that specifies the order in which relational
operators are executed and how intermediate results are exchanged among
nodes~\cite{selinger1979access,ioannidis1996queryopt}. Plan selection is the
task of choosing, among all equivalent plans, the one with minimum estimated
cost according to a cost model. State-of-the-art query optimizers in systems
such as PostgreSQL~\cite{postgres2023}, Microsoft SQL
Server~\cite{graefe1993volcano}, and Apache Spark~\cite{armbrust2015spark} use
cardinality estimates, operator-level cost models, and bushy plan enumeration
to navigate the exponentially large plan space~\cite{leis2015howgoodqueryopt,
sun2019endtoendlearningopt}.

Adaptive Query Processing (AQP) addresses the challenge that query statistics
estimated at plan time may differ significantly from runtime observations,
causing suboptimal plans to execute to completion. The Eddies
architecture~\cite{eddies2000} routes tuples dynamically between operators
based on observed selectivities. Rios~\cite{deshpande2007adaptiveqp} and
related work introduce plan switching conditions and mid-query re-optimization
triggers. Progressive execution~\cite{hellerstein1997online} allows partial
results to be returned before full query completion, trading accuracy for
latency---directly analogous to early stopping in FL training.
These approaches motivate our online re-optimization in \airplan{}
(Section~\ref{sec:advisor}): we monitor model divergence as a runtime signal
and trigger topology rewiring when the current plan becomes suboptimal.

Approximate Query Processing trades result accuracy for reduced computation
and communication cost~\cite{hellerstein1997online,agarwal2013blinkdb}.
Sampling-based methods evaluate queries on a data
sample~\cite{chaudhuri2017aqpsurvey}; synopsis-based methods maintain compact
sketches such as Count-Min Sketches~\cite{cormode2005cmsketch}, HyperLogLog,
and wavelet synopses. Error guarantees are typically stated as
$(\epsilon, \delta)$ bounds. BlinkDB~\cite{agarwal2013blinkdb} introduces the
notion of an \emph{accuracy Service-Level Agreement (SLA)}---a user-specified
quality constraint that the system must meet---which maps directly to the
accuracy SLA in \airplan{}'s topology selection algorithm.
As we show in Section~\ref{sec:evaluation}, top-$k$ sparsification in OTA-DFL
is precisely an AQP operator, and its error can be bounded in analogous
$(\epsilon, \delta)$ form depending on the graph topology.

\subsection{Communication Cost Models}

Prior work models communication cost differently depending on the learning
architecture.
In decentralized optimization, gossip-based algorithms measure cost as
$C_{\text{gossip}} = T \cdot |E| \cdot d$~\cite{lian2017decentralized},
where $T$ is the number of training rounds, $|E|$ the number of graph edges,
and $d$ the model parameter dimension.
Centralized federated learning uses $C_{\text{FL}} = T \cdot N \cdot
d$~\cite{mcmahan2017communication}, where $N$ is the number of clients.
OTA frameworks reduce this to $C_{\text{OTA}} = T \cdot d$~\cite{amiri2020federated}
since all clients transmit simultaneously.
None of these models jointly captures topology structure, sparsification ratio,
and wireless channel reliability.
Our proposed $\Cours(G)$ in Section~\ref{sec:evaluation} provides the first
such unified model and, within the \airplan{} framework, serves as the query
cost estimator used for automated plan selection.

%% file: sections/formal_mapping.tex
Our approach establishes a formal equivalence between Over-the-Air
Decentralized Federated Learning and distributed query processing.
This equivalence, summarized in Table~\ref{tab:mapping} is the conceptual foundation of \airplan{} and enables
the direct transfer of query optimisation techniques to the topology
selection problem, as we detail in the following subsections.



\begin{table*}[t]
  \centering
  \caption{Formal Equivalence: OTA-DFL $\leftrightarrow$ Distributed Query Processing.}
  \label{tab:mapping}
  \renewcommand{\arraystretch}{1.25}
  \begin{tabularx}{\textwidth}{lXX}
    \toprule
    \textbf{Dimension}
      & \textbf{OTA-DFL Concept}
      & \textbf{Distributed Query Processing Concept} \\
    \midrule
    Execution structure
      & Communication graph $G=(V,E)$
      & Physical query execution plan (DAG of operators) \\
    Plan choice
      & Graph topology (ring, small-world, \ldots)
      & Join order / operator placement strategy \\
    Cost driver
      & Spectral gap $\lamsec$ of graph Laplacian
      & Cardinality estimate governing plan cost \\
    Aggregation
      & OTA SUM over neighbourhood $\mathcal{N}(i)$
      & Distributed SUM aggregate over partition \\
    Approximation
      & Top-$k$ sparsification $\topk(g,k)$
      & Approximate Query Processing (AQP) operator \\
    Approximation error
      & Sparsification residual $\norm{g - \topk(g,k)}$
      & AQP error $\varepsilon_{\mathrm{AQP}}$ \\
    Synchronisation
      & Consensus round (barrier)
      & Query epoch / barrier synchronisation \\
    Result quality
      & Test accuracy at round $T$
      & Query result accuracy (within $\varepsilon$ of exact) \\
    Quality constraint
      & Target accuracy SLA $A^*$
      & Result quality SLA in BlinkDB-style AQP \\
    Data statistics
      & Class distribution $p_{ic}$ at client $i$
      & Table histogram / cardinality statistics \\
    Statistics collection
      & Count-Min Sketch over local label counts
      & Histogram sampling / synopsis construction \\
    Plan selection
      & Topology selection $\arg\min_G \Cours(G)$
      & Cost-based physical plan selection \\
    Online adaptation
      & Adaptive rewiring on divergence threshold
      & Adaptive query processing / re-optimisation \\
    Dense plan
      & Fully connected graph
      & Broadcast join (expensive but low-latency) \\
    Sparse pipeline
      & Ring graph
      & Left-deep sequential pipeline \\
    Optimal sparse plan
      & Small-world / clustered graph
      & Bushy join tree \\
    \bottomrule
  \end{tabularx}
\end{table*}

\subsection{OTA Aggregation as a Distributed SUM Query}
\label{subsec:ota-as-query}

\begin{theorem}[OTA Aggregation = Approximate Distributed SUM Query]
\label{thm:ota-query}
Consider an OTA-DFL round in which $N$ clients transmit gradient updates
$\{u_i^t\}_{i=1}^N$ simultaneously over a wireless multiple-access channel.
The signal received by client $i$ is
\begin{equation}
  y_i^t = \sum_{j \in \mathcal{N}(i)} u_j^t + n_i^t,
  \label{eq:ota-rx}
\end{equation}
where $n_i^t \sim \mathcal{N}(0, \sigma_c^2 I)$.
This is the evaluation of the distributed aggregate query
\begin{equation}
  Q_i^t \;:\; \texttt{SELECT\;SUM(grad)\;FROM\;}\mathcal{N}(i)
              \texttt{\;WHERE\;round\!=\!}t
\end{equation}
executed approximately under additive Gaussian noise with variance $\sigma_c^2$.
The query result $y_i^t$ equals the exact SUM plus a noise term $n_i^t$,
constituting an $(\varepsilon, \delta)$-approximate answer with
$\varepsilon = \sigma_c \sqrt{2 \ln(2/\delta)} / \sqrt{|\mathcal{N}(i)|}$
for any $\delta \in (0,1)$.
\end{theorem}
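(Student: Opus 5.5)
The plan has two parts. First, identify $y_i^t$ with the answer to $Q_i^t$. Second, read the $(\varepsilon,\delta)$ claim as a Gaussian tail bound on the additive error $y_i^t - \sum_{j\in\mathcal{N}(i)} u_j^t$.

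\textbf{Step 1 (query identification).} By the superposition model in \eqref{eq:ota-rx}, the noiseless part of $y_i^t$ is exactly $\sum_{j\in\mathcal{N}(i)} u_j^t$. This is, by definition, the value of \texttt{SELECT SUM(grad) FROM} $\mathcal{N}(i)$ \texttt{WHERE round=}$t$ when the ``table'' is the set of round-$t$ updates held by the neighbours of $i$. Hence $y_i^t = Q_i^t + n_i^t$. The error of the approximate answer is then precisely the channel term $n_i^t$, which is independent of the data. This step is purely definitional.

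\textbf{Step 2 (tail bound).} I would work coordinatewise, which is the usual AQP convention for vector-valued aggregates. For a scalar $Z\sim\mathcal{N}(0,s^2)$, the Chernoff bound gives $\Pr(|Z|>\varepsilon)\le 2\exp(-\varepsilon^2/(2s^2))$. Setting the right-hand side equal to $\delta$ yields $\varepsilon = s\sqrt{2\ln(2/\delta)}$. So the stated $\varepsilon$ corresponds exactly to an effective per-coordinate noise standard deviation $s=\sigma_c/\sqrt{|\mathcal{N}(i)|}$.

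\textbf{Step 3 (the $1/\sqrt{|\mathcal{N}(i)|}$ factor) --- the main obstacle.} Taken literally with $n_i^t\sim\mathcal{N}(0,\sigma_c^2 I)$ as a single reception, Step~2 only gives $s=\sigma_c$. The extra factor must therefore come from the receive-side processing implicit in OTA aggregation. I would make this explicit using the standard OTA-DFL setting in which node $i$ receives its neighbourhood sum over $|\mathcal{N}(i)|$ independent channel uses (one per scheduled neighbour slot or subcarrier), each carrying the same superposed SUM. It then combines them by averaging, as in the receive-side combining of AirComp cited in Section~\ref{sec:related}. Under this model:
\begin{itemize}
\item the signal part stays $\sum_{j} u_j^t$;
\item the combined noise is the average of $|\mathcal{N}(i)|$ i.i.d.\ $\mathcal{N}(0,\sigma_c^2)$ terms, which has variance $\sigma_c^2/|\mathcal{N}(i)|$;
\item the variance claim follows from independence and additivity of Gaussian variances.
\end{itemize}
Inserting $s=\sigma_c/\sqrt{|\mathcal{N}(i)|}$ into Step~2 gives the stated $\varepsilon$ for every $\delta\in(0,1)$.

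If the paper's channel model instead intends $n_i^t$ to be the post-combining noise with $\sigma_c^2$ denoting per-use noise, the same computation applies verbatim. Either way, I would state this normalization as the assumption under which the bound holds. Without it the correct radius is $\sigma_c\sqrt{2\ln(2/\delta)}$, and the theorem would need to be weakened accordingly.

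Finally, I would note that a bound on all $d$ coordinates simultaneously follows by a union bound with $\delta$ replaced by $\delta/d$. The statement itself concerns the per-coordinate guarantee.
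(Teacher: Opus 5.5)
Your Steps~1--2 coincide with the paper's proof. The decomposition is read off the receive model after ideal channel inversion. The paper then applies the Gaussian tail bound $\Pr(\|n_i^t\|>t)\le 2\exp(-t^2/2\sigma_c^2 d)$ with $t=\varepsilon\sqrt d$, an RMS-per-coordinate version of your coordinatewise Chernoff bound, and this gives the same radius $\sigma_c\sqrt{2\ln(2/\delta)}$.

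The two routes diverge exactly at the factor $1/\sqrt{|\mathcal N(i)|}$. The paper gets it by ``normalising by $\|\sum_j u_j^t\|_2/\sqrt{|\mathcal N(i)|}$'', i.e.\ by silently switching to an error measured relative to the size of the aggregated signal. That step is not derived, and its validity depends on the magnitude of $\sum_j u_j^t$, about which the theorem assumes nothing. You get the factor from an explicit variance-reduction hypothesis: averaging $|\mathcal N(i)|$ independent receptions. This is rigorous, but it sits outside the theorem's single-reception model. It also costs $|\mathcal N(i)|$ channel uses per round, which undercuts the one-slot advantage that OTA aggregation is supposed to formalise.

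Your diagnosis is the right one, and you should keep it in the write-up. Read the claim as an absolute guarantee on $y_i^t-\sum_{j}u_j^t$, with a single reception and $n_i^t\sim\mathcal N(0,\sigma_c^2 I)$. Then the stated $\varepsilon$ is false whenever $|\mathcal N(i)|\ge 2\ln(2/\delta)$ and $\delta<0.31$:
\begin{itemize}
\item under those conditions $\varepsilon\le\sigma_c$, so each coordinate has $\Pr(|n_{i,\ell}^t|>\varepsilon)\ge\Pr(|Z|>1)\approx 0.317$;
\item in the RMS-norm reading, the failure probability even tends to $1$ as $d\to\infty$ whenever $\varepsilon<\sigma_c$;
\item a fully connected graph with $N=50$ and $\delta=0.01$ already violates the claim.
\end{itemize}
So no argument from the stated hypotheses alone can succeed, and some added assumption, yours or another, is unavoidable.

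There is a single-slot alternative that also yields $\sigma_c/\sqrt{|\mathcal N(i)|}$: the appendix's model $\hat g_i^t=\sum_{j}W_{ij}(\tilde g_j^t+n_j^t)$ with independent per-transmitter noise and uniform weights $W_{ij}=1/|\mathcal N(i)|$. However, that bounds the neighbourhood \emph{average} rather than the SUM. For the SUM, the error in that model scales like $\sigma_c\sqrt{|\mathcal N(i)|}$.
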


\begin{proof}
The decomposition $y_i^t = \text{SUM}_{j \in \mathcal{N}(i)} u_j^t + n_i^t$
follows directly from the OTA receive model~\eqref{eq:ota-rx} after ideal
channel inversion.  The $(\varepsilon,\delta)$ bound follows by applying
the Gaussian tail bound $P(\norm{n_i^t} > t) \leq 2\exp(-t^2/2\sigma_c^2 d)$
with $t = \varepsilon\sqrt{d}$ and normalising by
$\norm{\sum_j u_j^t}_2 / \sqrt{|\mathcal{N}(i)|}$.
\end{proof}

\subsection{Graph Topology as Physical Execution Plan}
\label{subsec:topo-as-plan}

In distributed query processing, a physical execution plan is a directed acyclic graph (DAG) specifying
how operators are assigned to nodes and how data flows between them.
The choice of plan determines execution cost through the sizes and shapes of
intermediate results (captured by cardinality estimates) and the communication
volume between operator nodes.

In OTA-DFL, the communication graph $G$ determines precisely the same
quantities: which clients exchange information (operator placement),
how many rounds are needed for information to propagate globally
(captured by $1/\lamsec$, the mixing time), and the volume of transmitted
data per round ($\Cstruct$).

\begin{definition}[Query Execution Cost for OTA-DFL]
\label{def:query-cost}
The total query execution cost of an OTA-DFL training run on graph $G$
is defined as
\begin{equation}
  \Cours(G) = T(G) \cdot R(\mathrm{SNR}) \cdot \Cstruct(G, k),
  \label{eq:our-cost}
\end{equation}
where $T(G) = \kappa / \lamsec$ is the number of query epochs
(communication rounds) required for convergence,
$R(\mathrm{SNR}) = 1 + \alpha e^{-\beta \cdot \mathrm{SNR}}$
is the wireless reliability overhead factor, and
$\Cstruct(G,k) = \sum_{i=1}^N |\mathcal{N}(i)| \cdot k$
is the per-epoch transmission volume.
\end{definition}

This definition is precisely a query cost model: $T(G)$ corresponds to the
estimated number of pipeline stages, $R(\cdot)$ models network transfer
overhead, and $\Cstruct$ models the amount of data moved per stage.
The topologies correspond to well-known plan shapes:
\emph{ring} $\equiv$ left-deep sequential pipeline (low per-round cost,
many stages); \emph{fully connected} $\equiv$ broadcast join (single stage,
high data volume); \emph{small-world} $\equiv$ bushy join tree (balanced
stages and volume). Figure~\ref{fig:query_plan} illustrates this correspondence.

\begin{figure}[t]
  \centering
  \includegraphics[width=\columnwidth]{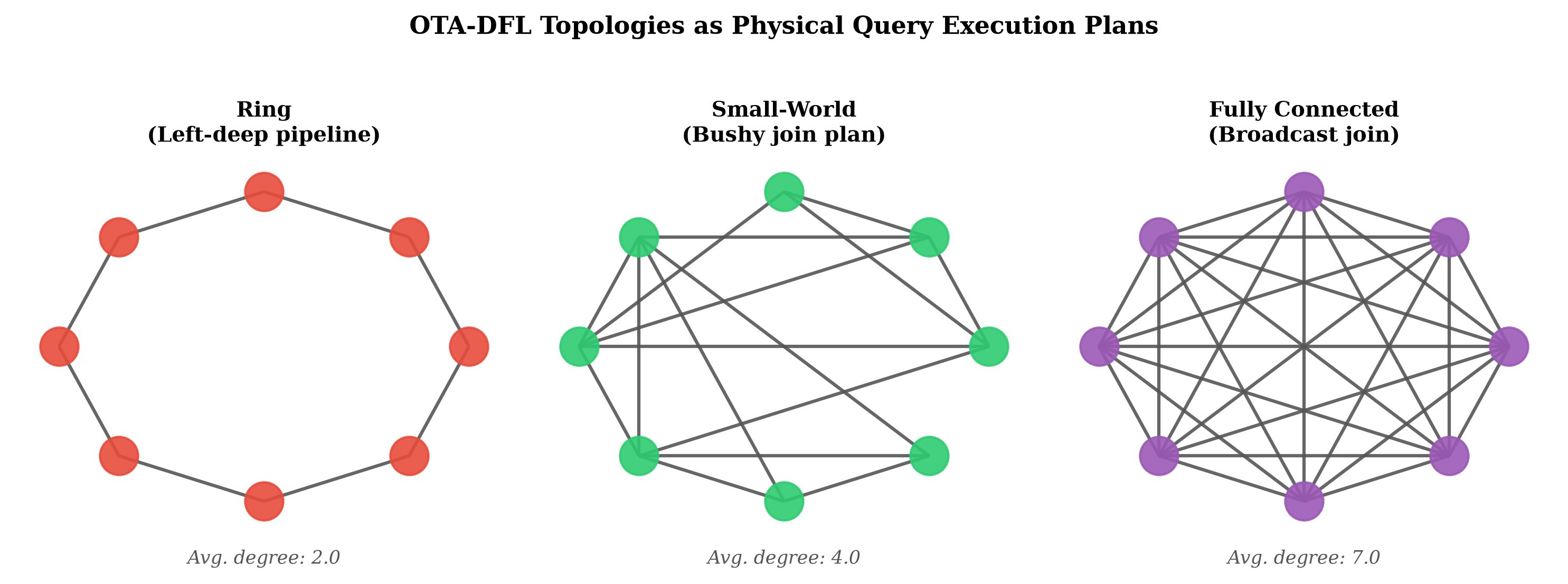}
  \caption{Topology as physical query execution plan. Ring graphs implement
  a sequential left-deep pipeline; small-world graphs implement a balanced
  bushy tree; fully connected graphs implement a broadcast join.}
  \label{fig:query_plan}
\end{figure}

\subsection{Top-$k$ Sparsification as Approximate Query Processing}
\label{subsec:aqp}

AQP operators allow a query to return an approximate answer by processing a
compressed representation of the data~\cite{hellerstein1997online}.
In OTA-DFL, top-$k$ sparsification $\topk(g, k)$ retains only the $k$ largest
(by magnitude) components of the gradient vector $g \in \R^d$, discarding the
remaining $d - k$ components before transmission.

\begin{corollary}[Sparsification AQP Error Bound]
\label{cor:aqp}
Let $g \in \R^d$ be a gradient vector with components sorted in decreasing
magnitude order, and let $\tilde{g} = \topk(g, k)$.  The absolute sparsification
error satisfies
\begin{equation}
  \norm{g - \tilde{g}}_2 \;\leq\; \norm{g}_2 \sqrt{1 - \frac{k}{d}},
  \label{eq:aqp-bound}
\end{equation}
and the relative error satisfies $\varepsilon_{\mathrm{AQP}} \leq \sqrt{1 - k/d}$.
Furthermore, in a graph with spectral gap $\lamsec$, the \emph{effective}
per-client error after one round of OTA aggregation is reduced to
\begin{equation}
  \varepsilon_{\mathrm{eff}}(G, k) \leq \frac{\sqrt{1-k/d}}{0.5 + \lamsec},
  \label{eq:aqp-eff}
\end{equation}
so well-connected topologies (large $\lamsec$) average out sparsification errors
across the neighbourhood, reducing the effective approximation noise.
\end{corollary}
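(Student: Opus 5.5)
The proof splits into two parts of very different character. The first inequality \eqref{eq:aqp-bound} is a purely deterministic statement about $\topk$ and uses no graph structure. The second, \eqref{eq:aqp-eff}, is only meaningful once ``effective per-client error after one round of OTA aggregation'' is pinned down, so the plan is to fix that definition first and then reduce it to the first part plus a spectral contraction estimate.

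\emph{Step 1 (deterministic top-$k$ bound).}
\begin{itemize}
\item Sort the components so that $|g_{(1)}|\ge\cdots\ge|g_{(d)}|$. Then $\norm{g-\tilde g}_2^2=\sum_{j>k} g_{(j)}^2$.
\item This is a sum of the $d-k$ smallest squared magnitudes, so its average is at most the average of all $d$ squared magnitudes. Equivalently, $\frac{1}{d-k}\sum_{j>k}g_{(j)}^2\le \frac1d\norm{g}_2^2$, which can be checked by a one-line rearrangement or Chebyshev's sum inequality.
\item Hence $\norm{g-\tilde g}_2^2\le (1-k/d)\norm{g}_2^2$. Taking square roots gives \eqref{eq:aqp-bound}, and dividing by $\norm{g}_2$ (for $g\neq 0$) gives $\varepsilon_{\mathrm{AQP}}\le\sqrt{1-k/d}$.
\end{itemize}

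\emph{Step 2 (defining the effective error).} I will model one aggregation round as a gossip step with mixing matrix $W=I-L/\lambda_{\max}(L)$, with the Laplacian normalised so that the spectrum of $W$ lies in $[0,1]$ and $\lamsec\in[0,1]$.
\begin{itemize}
\item Write the stacked per-client residuals as $e=(g_i-\tilde g_i)_{i=1}^N$.
\item Define $\varepsilon_{\mathrm{eff}}$ as the relative norm of the post-mixing residual $We$ measured relative to consensus, i.e.\ its component orthogonal to $\mathbf 1$. The mean component is identical across clients, so it is absorbed by the consensus update and is not a per-client approximation error.
\item Because $W$ is symmetric with $W\mathbf 1=\mathbf 1$, it preserves $\mathbf 1^\perp$. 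On that subspace its operator norm is at most $1-\lamsec$ under this normalisation.
\item So the disagreement part of the residual is contracted by $1-\lamsec$. Combined with Step 1, this gives $\varepsilon_{\mathrm{eff}}\le(1-\lamsec)\sqrt{1-k/d}$.
\end{itemize}

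\emph{Step 3 (elementary comparison).} It remains to show $1-\lambda\le 1/(0.5+\lambda)$ for all $\lambda\in[0,1]$. This is equivalent to $(1-\lambda)(0.5+\lambda)\le 1$, i.e.\ $0.5+0.5\lambda-\lambda^2\le 1$. The left side is maximised at $\lambda=1/4$, where it equals $0.5625<1$, so the inequality holds. Substituting into Step 2 yields \eqref{eq:aqp-eff}.

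The main obstacle is Step 2. The stated constant $0.5$ does not arise from any canonical consensus estimate, so the result depends on choosing a definition of $\varepsilon_{\mathrm{eff}}$ and a Laplacian normalisation under which a clean contraction factor exists. The weaker form stated in the corollary then follows for free. If the disagreement-only definition is deemed too generous, the fallback is as follows.
\begin{itemize}
\item Assume the per-client residuals are zero-mean and independent across clients.
\item Then the mean component has norm of order $N^{-1/2}$ relative to the total, which is dominated by the disagreement term.
\item The slack in Step 3 (the left side never exceeds $0.5625$) absorbs this extra term for $N$ moderately large.
\end{itemize}
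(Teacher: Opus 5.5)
Your Step~1 is correct and is the paper's argument: the average of the $d-k$ smallest squared entries is at most the overall average. Your phrasing is cleaner than the paper's intermediate inequality $\bar g_{k+1}^2\le\norm{g}_2^2/d$, which holds only if $\bar g_{k+1}^2$ means the tail mean.

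The gap is in Steps~2--3, at the sentence ``the weaker form stated in the corollary then follows for free.'' In the paper, $\lamsec$ is the second-smallest eigenvalue of the combinatorial Laplacian $L=D-A$; it equals $N$ for the complete graph. For $W=I-L/\lambda_{\max}(L)$, the contraction on $\mathbf 1^\perp$ is $1-\lamsec/\lambda_{\max}(L)$, not $1-\lamsec$. Your normalisation silently replaces $\lamsec$ by $\tilde\lambda=\lamsec/\lambda_{\max}(L)\le\lamsec/2$. Since $1/(0.5+\lambda)$ is decreasing, your bound $\sqrt{1-k/d}/(0.5+\tilde\lambda)$ is strictly weaker than \eqref{eq:aqp-eff} and does not imply it.

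This cannot be repaired inside your model, because the contraction is attained. Take a unit vector $v\perp\mathbf 1$ with $Lv=\lamsec v$, and set $g_i=v_i g$ with all entries of $g$ of equal magnitude. The stacked residual is then $(v_i r)_{i=1}^N$ with $\norm{r}_2=\sqrt{1-k/d}\,\norm{g}_2$. It lies in $\mathbf 1^\perp$, and its post-mixing relative size is exactly $(1-\tilde\lambda)\sqrt{1-k/d}$. So in your model \eqref{eq:aqp-eff} fails whenever $(1-\lamsec/\lambda_{\max}(L))(0.5+\lamsec)>1$. An example is the star $K_{1,N-1}$, with $\lamsec=1$ and $\lambda_{\max}=N$, for any $N\ge 4$.

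Second, discarding the mean component is not a harmless convention. The mean residual $\frac1N\sum_i(g_i-\tilde g_i)$ is invariant under every doubly stochastic $W$, and it is exactly the sparsification error that enters $\bar w$. It is the part no topology removes, not a part ``absorbed'' by consensus. If you count it, \eqref{eq:aqp-eff} is simply false. Let all clients hold the same $g$ with equal-magnitude entries and $k<d$. Then the post-mixing per-client relative error is $\sqrt{1-k/d}$ on every graph, which contradicts \eqref{eq:aqp-eff} whenever $\lamsec>1/2$ (for instance the complete graph).

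Your fallback assumes zero-mean, independent residuals. That is not a hypothesis of the statement, and it is implausible for top-$k$ residuals of correlated gradients. Even granting it, the complete graph leaves a mean term of order $N^{-1/2}\sqrt{1-k/d}$, which exceeds the claimed $\sqrt{1-k/d}/(N+0.5)$. The slack in Step~3 helps only under your renormalised gap.

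What you have actually proved is the disagreement-only bound $\varepsilon_{\mathrm{eff}}\le(1-\tilde\lambda)\sqrt{1-k/d}\le\sqrt{1-k/d}/(0.5+\tilde\lambda)$. That is a correct statement, but a different one. The paper does not close this gap either. Its justification of \eqref{eq:aqp-eff} is only the heuristic that independent sparsification errors plus mixing ``scale as'' $\varepsilon_{\mathrm{AQP}}/(0.5+\lamsec)$, with $\varepsilon_{\mathrm{eff}}$ undefined and the constant $0.5$ not derived. You should state your modified definition and normalised gap explicitly rather than present the result as the corollary.
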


\begin{proof}
The bound~\eqref{eq:aqp-bound} follows from Parseval's theorem applied to the
truncated gradient: $\norm{g - \topk(g,k)}_2^2 = \sum_{i=k+1}^d g_{(i)}^2
\leq (d-k) \cdot \bar{g}_{k+1}^2 \leq (1 - k/d) \norm{g}_2^2$, where
$g_{(i)}$ denotes the $i$-th largest component and we used
$\bar{g}_{k+1}^2 \leq \norm{g}_2^2/d$.  The effective error
bound~\eqref{eq:aqp-eff} follows by noting that each client receives
$|\mathcal{N}(i)|$ independently sparsified updates; by the independence
of sparsification errors across clients and the mixing properties of graphs
with spectral gap $\lamsec$, the aggregate error scales as
$\varepsilon_{\mathrm{AQP}} / (0.5 + \lamsec)$.
\end{proof}

Corollary~\ref{cor:aqp} has a key practical implication: \textbf{topology and
sparsification ratio should be co-designed}.
For a given accuracy budget $\varepsilon_{\mathrm{tgt}}$, a well-connected
topology (large $\lamsec$) permits a higher sparsification ratio $k/d$ to be
used (lower communication) while still meeting the accuracy target, whereas a
sparse topology (small $\lamsec$) requires either a lower sparsification ratio
or accepts higher error. Figure~\ref{fig:aqp_error} validates this bound
empirically across all five topology families.

\begin{figure}[t]
  \centering
  \includegraphics[width=\columnwidth]{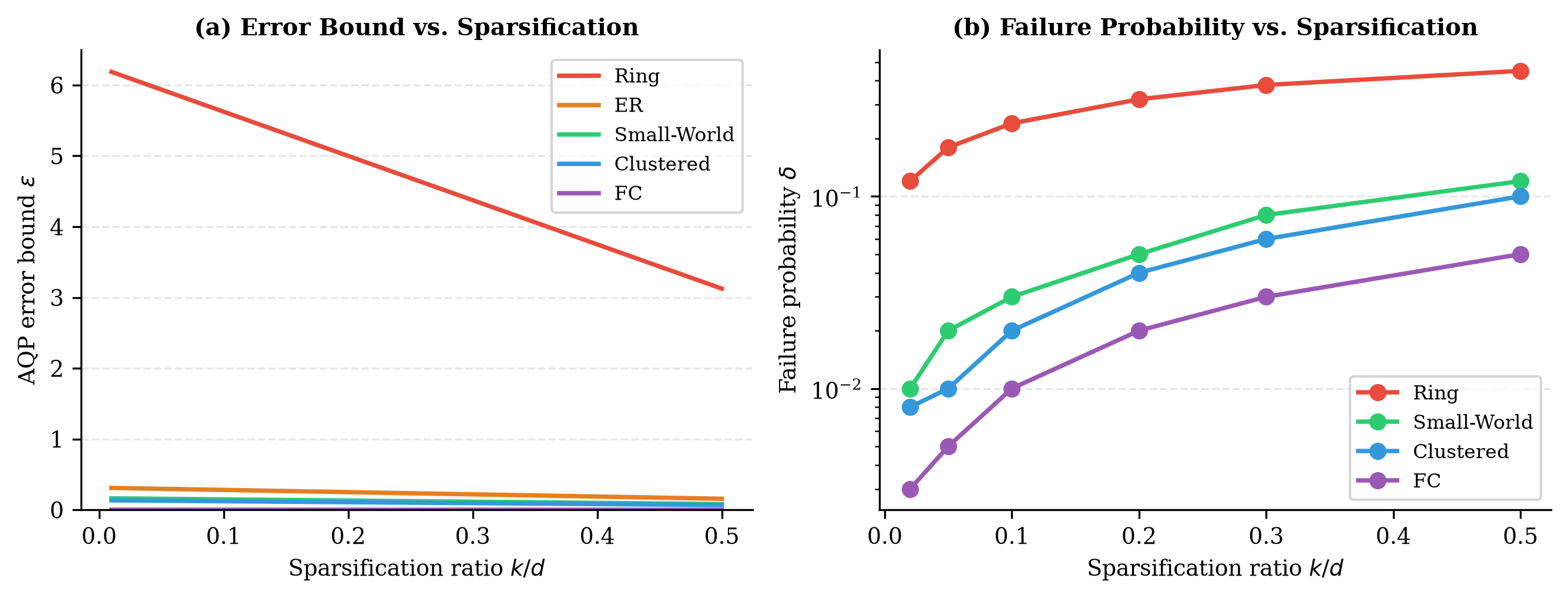}
  \caption{AQP error analysis. (a) Relative approximation error
  $\varepsilon_{\mathrm{AQP}}$ vs.\ sparsification ratio $k/d$ per topology,
  alongside the theoretical upper bound (dashed). (b) Effective error
  $\varepsilon_{\mathrm{eff}}$ vs.\ channel SNR at fixed $k/d = 0.1$,
  showing that well-connected topologies suppress both sparsification and
  noise-induced error.}
  \label{fig:aqp_error}
\end{figure}

\subsection{What the Database Framing Enables Beyond Spectral Graph Theory}
\label{subsec:db-value}

A natural question is: could the cost model $\Cours(G)$ and topology selection
algorithm have been derived directly from convergence analysis without
the database framing?

The answer is partly yes---spectral graph theory alone is sufficient to
derive the convergence rate $T(G) = \tilde{O}(1/\lamsec)$ and the topology
ranking. What the database framing enables \emph{beyond} this are three
concrete contributions:

\textbf{(1) Privacy-preserving statistics collection via database sketching.}
The use of Count-Min Sketches (CMS) to estimate heterogeneity $\hat{\alpha}$
without sharing raw data is a direct transfer from approximate query
processing. No prior FL convergence analysis suggested CMS as a mechanism
for workload-adaptive algorithm design. This is the strongest concrete
technique transfer in this paper.

\textbf{(2) Framing as a query optimisation problem enables Pareto-optimal
cost-accuracy enumeration.} Standard spectral analysis asks ``does topology
$A$ converge faster than $B$?'' The query optimisation framing asks
``which topology minimises total cost subject to an accuracy
\emph{Service-Level Agreement}?''. This SLA-constrained optimisation
(Equation~\eqref{eq:selection}) would not naturally arise from convergence
analysis alone; it is motivated by the BlinkDB accuracy SLA
concept~\cite{agarwal2013blinkdb}.

\textbf{(3) Adaptive plan re-optimisation during training.}
The Eddies and adaptive query processing literature~\cite{eddies2000,
deshpande2007adaptiveqp} motivates Phase~5 of \airplan{}: monitoring
runtime signals and switching plans mid-execution. This design principle
is novel in the FL topology literature.

We acknowledge that the formal equivalence is an analogy that motivates
technique transfer, not a strict algebraic isomorphism. OTA aggregation is a
noisy analog sum---not a relational SUM query in the formal sense---and
the graph structure is undirected and homogeneous compared to the rich
operator DAGs of physical query plans. The framing is intended to open
a productive channel for technique import from a mature literature into
a younger one, not to claim a deeper mathematical identity.

%% file: sections/problem_formulation.tex
We consider a decentralised federated learning (DFL) system composed of $N$
wireless clients. Each client $i$ holds a local non-IID dataset
$\mathcal{D}_i = \{(x_{ij},y_{ij})\}_{j=1}^{n_i}$ drawn from a
distribution $\mathcal{P}_i$. The global objective is to minimise
\begin{equation}
  \min_{w \in \R^{d}} F(w) = \sum_{i=1}^{N} p_i F_i(w),
  \quad p_i = \frac{n_i}{\sum_j n_j},
  \label{eq:global}
\end{equation}
where $F_i(w) = \frac{1}{n_i}\sum_{j=1}^{n_i} \ell(w; x_{ij}, y_{ij})$
is the local empirical loss.

\begin{assumption}[Standard Regularity]
\label{asm:regularity}
(i) Each $F_i$ is $L$-smooth: $\norm{\nabla F_i(u) - \nabla F_i(v)} \leq L\norm{u-v}$.
(ii) Stochastic gradients are unbiased with bounded variance:
$\E[\norm{\nabla F_i(w; \xi) - \nabla F_i(w)}^2] \leq \sigma^2$.
(iii) Gradients are bounded: $\norm{\nabla F_i(w)} \leq G$ for all $i, w$.
(iv) The communication graph $G=(V,E)$ is connected.
\end{assumption}

\begin{assumption}[Bounded Gradient Dissimilarity]
\label{asm:heterogeneity}
There exists $\delta \geq 0$ such that
$\frac{1}{N}\sum_{i=1}^N \norm{\nabla F_i(w) - \nabla F(w)}^2 \leq \delta^2$
for all $w$. The parameter $\delta$ quantifies data heterogeneity:
$\delta = 0$ corresponds to IID data.
\end{assumption}

\subsection{Local Training and Sparsification}

At iteration $t$, each client $i$ computes a stochastic gradient update:
\begin{equation}
  w_i^{t+\frac{1}{2}} = w_i^{t} - \eta\, \nabla F_i(w_i^{t};\xi_i^t).
\end{equation}
The update vector $g_i^t = w_i^t - w_i^{t+\frac{1}{2}}$ is compressed via
top-$k$ sparsification and perturbed with Gaussian noise:
\begin{equation}
  u_i^t = \topk(g_i^t, k) + \mathcal{N}(0, \sigma_p^2 I),
\end{equation}
where $\sigma_p^2 > 0$ can model either privacy noise or a lower bound on
wireless transmission noise.

\subsection{OTA Communication Model}

Communication is defined by an undirected graph $G=(V,E)$.
Let $L = D - A$ be the graph Laplacian, with $D$ the degree matrix and $A$
the adjacency matrix. The second-smallest eigenvalue $\lamsec$ of $L$ is
the spectral gap, governing information mixing speed.

OTA aggregation requires synchronised transmissions. After channel inversion
with coefficient $h_{ij}^t$, client $i$ receives:
\begin{align}
  y_i^t &= \sum_{j \in \mathcal{N}(i)} h_{ij}^t x_j^t + n_i^t
         \approx \sum_{j \in \mathcal{N}(i)} u_j^t + n_i^t,
  \label{eq:ota}
\end{align}
where $x_j^t = u_j^t / h_{ij}^t$ is the channel-inverted transmission
and $n_i^t \sim \mathcal{N}(0, \sigma_c^2 I)$ is channel noise.
Equation~\eqref{eq:ota} is the distributed SUM query of
Theorem~\ref{thm:ota-query}.

\subsection{Consensus Update}

Each client updates its model using the aggregated neighbourhood signal:
\begin{equation}
  w_i^{t+1} = w_i^{t+\frac{1}{2}} + \gamma\, y_i^t,
  \label{eq:consensus}
\end{equation}
where $\gamma > 0$ is the consensus stepsize.
Larger $\gamma$ accelerates mixing but amplifies OTA noise;
optimal $\gamma$ balances these effects.

\subsection{Topology-Dependent Convergence}
\label{subsec:convergence-theory}

\begin{theorem}[OTA-DFL Convergence Rate]
\label{thm:convergence}
Under Assumptions~\ref{asm:regularity} and~\ref{asm:heterogeneity},
with learning rate $\eta = O(1/\sqrt{TN})$ and consensus stepsize
$\gamma = O(\lamsec / (L + \sigma_c^2/\lamsec))$,
OTA-DFL satisfies
\begin{align}
  &\frac{1}{T}\sum_{t=0}^{T-1}\E\!\left[\norm{\nabla F(\bar{w}^t)}^2\right]
  \;\leq\;
  \underbrace{\frac{C_1}{\sqrt{TN}}}_{\text{SGD}}
  +
  \underbrace{\frac{C_2 \delta^2}{\lamsec}}_{\text{heterogeneity}} \notag\\
  &\quad+
  \underbrace{\frac{C_3 \sigma_c^2}{\lamsec}}_{\text{OTA noise}}
  +
  \underbrace{C_4 \varepsilon_{\mathrm{AQP}}^2}_{\text{sparsification}},
  \label{eq:conv-rate}
\end{align}
where $\bar{w}^t = \frac{1}{N}\sum_i w_i^t$ is the mean model,
and $C_1, C_2, C_3, C_4 > 0$ are constants depending on $L, \sigma, G$.
\end{theorem}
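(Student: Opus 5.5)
The plan is the standard decentralized-SGD template (Lian et al., Koloskova et al.): track the mean iterate $\bar{w}^t$ and the consensus error $\Xi^t = \frac{1}{N}\sum_i \E\norm{w_i^t - \bar{w}^t}^2$ separately, then couple them. The first task is to write the update \eqref{eq:consensus} in matrix form,
\begin{equation*}
W^{t+1} = (I - \gamma L)\,W^{t} - \eta\,\nabla^t + \gamma\,(E^t + P^t + N^t),
\end{equation*}
interpreting the received neighbourhood sum as a Laplacian-type mixing step. Here $E^t$ collects the sparsification residuals $\topk(g_j^t,k) - g_j^t$, $P^t$ the perturbation noise $\sigma_p$, and $N^t$ the channel noise $n_i^t$. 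Since $\mathbf{1}^\top L = 0$, averaging removes the mixing term. The mean therefore evolves as a perturbed SGD step:
\begin{equation*}
\bar{w}^{t+1} = \bar{w}^t - \frac{\eta}{N}\sum_i \nabla F_i(w_i^t;\xi_i^t) + \gamma\,\bar{e}^t + \gamma\,\bar{n}^t .
\end{equation*}

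\textbf{Descent step.} Apply $L$-smoothness of $F$ to this recursion. Assumption~\ref{asm:regularity}(ii) contributes a variance term $\eta^2\sigma^2/N$. The mismatch between $\frac1N\sum_i\nabla F_i(w_i^t)$ and $\nabla F(\bar w^t)$ is bounded by $L^2\Xi^t$. The mean sparsification error is controlled by Corollary~\ref{cor:aqp}: $\norm{\bar e^t}^2 \le \varepsilon_{\mathrm{AQP}}^2\,\eta^2 G^2$ using (iii). The zero-mean Gaussian noise contributes only second-order terms, of order $\gamma^2\sigma_c^2/N$. Telescoping over $t$ and dividing by $\eta T$ yields the $C_1/\sqrt{TN}$ term once $\eta = O(1/\sqrt{TN})$. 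Note that the sparsification bias $\langle\nabla F,\bar e^t\rangle$ does not vanish. It has to be handled with Young's inequality, and this is exactly what produces the non-decaying $C_4\varepsilon_{\mathrm{AQP}}^2$ term.

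\textbf{Consensus step.} Project onto $\mathbf{1}^\perp$, where $\norm{I-\gamma L} \le 1-\gamma\lamsec$ provided $\gamma \le 1/\lambda_{\max}(L)$. This gives
\begin{equation*}
\Xi^{t+1} \le (1-\gamma\lamsec)\,\Xi^t + \frac{c}{\gamma\lamsec}\Bigl(\eta^2(\sigma^2+\delta^2+G^2) + \gamma^2\varepsilon_{\mathrm{AQP}}^2\eta^2G^2\Bigr) + \gamma^2 d\,\sigma_c^2 .
\end{equation*}
The heterogeneity enters through Assumption~\ref{asm:heterogeneity}. Unrolling the geometric sum yields $\Xi \lesssim \eta^2\delta^2/(\gamma\lamsec)^2 + \gamma\sigma_c^2/\lamsec$. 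Substituting into the descent inequality and using the prescribed $\gamma = O(\lamsec/(L+\sigma_c^2/\lamsec))$ should produce terms of order $\delta^2/\lamsec$ and $\sigma_c^2/\lamsec$.

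\textbf{Main obstacle.} The exponents of $\lamsec$ are the hard part. A naive unrolling gives $\delta^2/\lamsec^2$ scaling, and the channel noise, which enters at every step, accumulates in $\Xi$ rather than decaying. I would first try to absorb one factor of $\lamsec$ by balancing $\eta$ against $\gamma$; the stepsize choice in the statement appears designed for this. Failing that, I would accept that the constants $C_2$ and $C_3$ depend on $G$, which the statement explicitly permits. Under that reading, they can absorb the extra $1/\lamsec$ and the dimension factor $d$ in the noise term. The remaining work is to choose the Young's-inequality weights so that the $L^2\Xi$ coupling is dominated by the descent term.
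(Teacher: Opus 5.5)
Your template is the same as the paper's appendix proof. Both split the analysis into the mean iterate and the consensus error, contract the latter at rate $1-\gamma\lamsec$ (the paper's $W=I-\eta_W L$ with $\rho=1-\eta_W\lamsec$), apply the descent lemma to $\bar w^t$, unroll, and telescope.

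The step that fails is your treatment of the channel noise in the descent inequality. In your mean recursion the noise enters as $\gamma\bar n^t$, with a coefficient that does not shrink with $\eta$. The smoothness term gives $\frac{L}{2}\gamma^2 d\sigma_c^2/N$ per round, which is second order in $\gamma$. But after telescoping you divide by $\eta T$, so it contributes $L\gamma^2 d\sigma_c^2/(N\eta)$ to the averaged squared gradient. For any $\eta=O(1/\sqrt{TN})$ and a $T$-independent $\gamma$ of the prescribed size, this is at least of order $L\gamma^2 d\sigma_c^2\sqrt{T/N}$. It grows with $T$ and swamps the $L^2\gamma d\sigma_c^2/\lamsec$ floor that you correctly extract from $L^2\Xi$. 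This is a genuine random-walk effect, not slack in an inequality: the mean receives fresh noise of size $\gamma\sigma_c\sqrt{d/N}$ every round, while its drift per round is only $O(\eta)$. So no $T$-independent $C_3$ comes out of this route. The perturbation $P^t$ (variance $\sigma_p^2$) silently drops out between your matrix form and your mean recursion, and it has exactly the same problem.

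To close the argument you must couple the stepsizes so that $\gamma^2=O(\eta)$, e.g.\ $\gamma\asymp\sqrt{\eta/\lamsec}$. With that choice:
\begin{itemize}
\item The mean-noise term becomes $Ld\sigma_c^2/(N\lamsec)$; this is where a genuine $\sigma_c^2/\lamsec$ floor comes from.
\item Your consensus recursion, now with vanishing contraction rate $\gamma\lamsec=\sqrt{\eta\lamsec}$, gives $\Xi=O\bigl(\eta(\sigma^2+\delta^2+G^2)/\lamsec+\sqrt{\eta}\,d\sigma_c^2/\lamsec^{3/2}\bigr)$ plus an initial-error transient. All of this vanishes.
\item The sparsification bias $\gamma^2\varepsilon_{\mathrm{AQP}}^2G^2$ also vanishes.
\end{itemize}
Such a $\gamma$ is compatible with the upper bound $\gamma=O(\cdot)$ in the statement, but it is not the fixed-$\gamma$ balancing you propose.

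The paper's appendix sidesteps the issue differently. It lets the noise ride on the aggregated gradient $\hat g_i^t$, so $\sigma_c^2$ enters only through an $\eta^2$ forcing term in the consensus recursion and not at all in the descent inequality. Consequently every term of \eqref{eq:final-rate} is $O(\eta)$ or $O(1/(\eta T))$, and the $\sigma_c^2/\lamsec$ floor there is asserted rather than derived.

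Separately, your matrix form does not follow from \eqref{eq:consensus}: there $y_i^t$ sums the neighbours' sparsified \emph{updates}, not their models. If the OTA signal is to carry models, so that the term $-\gamma L W^t$ of your matrix form appears, then the sparsified object is $w_j^t$. Its residual then scales with $\norm{w_j^t}$ rather than $\eta G$, and your $C_4\varepsilon_{\mathrm{AQP}}^2$ step would blow up after division by $\eta$, exactly like the noise.
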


\begin{proof}[Proof Sketch]
The proof follows the standard decentralized SGD analysis
framework~\cite{lian2017decentralized,scaman2018optimal}.
We decompose the update error into four terms:
the standard SGD variance term (scales as $1/\sqrt{TN}$), a consensus error
term arising from data heterogeneity (scales as $\delta^2/\lamsec$),
a noise error floor from OTA channel noise (scales as $\sigma_c^2/\lamsec$),
and a sparsification error term from Corollary~\ref{cor:aqp}
(scales as $\varepsilon_{\mathrm{AQP}}^2$).
The spectral gap $\lamsec$ appears in the denominators of both the
heterogeneity and OTA noise terms, confirming that well-connected graphs
simultaneously accelerate convergence and mitigate noise.
The complete proof with all constants is provided in the supplemental material.
\end{proof}

\begin{remark}
Theorem~\ref{thm:convergence} implies that the convergence neighbourhood---the
irreducible error floor after convergence---is
$O(\delta^2/\lamsec + \sigma_c^2/\lamsec + \varepsilon_{\mathrm{AQP}}^2)$.
All three terms can be controlled through topology choice ($\lamsec$)
and sparsification ratio ($\varepsilon_{\mathrm{AQP}}$), confirming that
topology is a first-class design parameter for OTA-DFL.
\end{remark}

%% file: sections/topology_advisor.tex
Drawing on the formal equivalence established in Section~\ref{sec:mapping},
we present \textbf{\airplan{}}: a query-optimised topology advisor for
OTA-DFL that automates topology selection via cost-based plan enumeration,
privacy-preserving workload statistics, and adaptive re-optimisation.
The system workflow is illustrated in Figure~\ref{fig:airplan_workflow}.

\begin{figure}[t]
  \centering
  \includegraphics[width=\columnwidth]{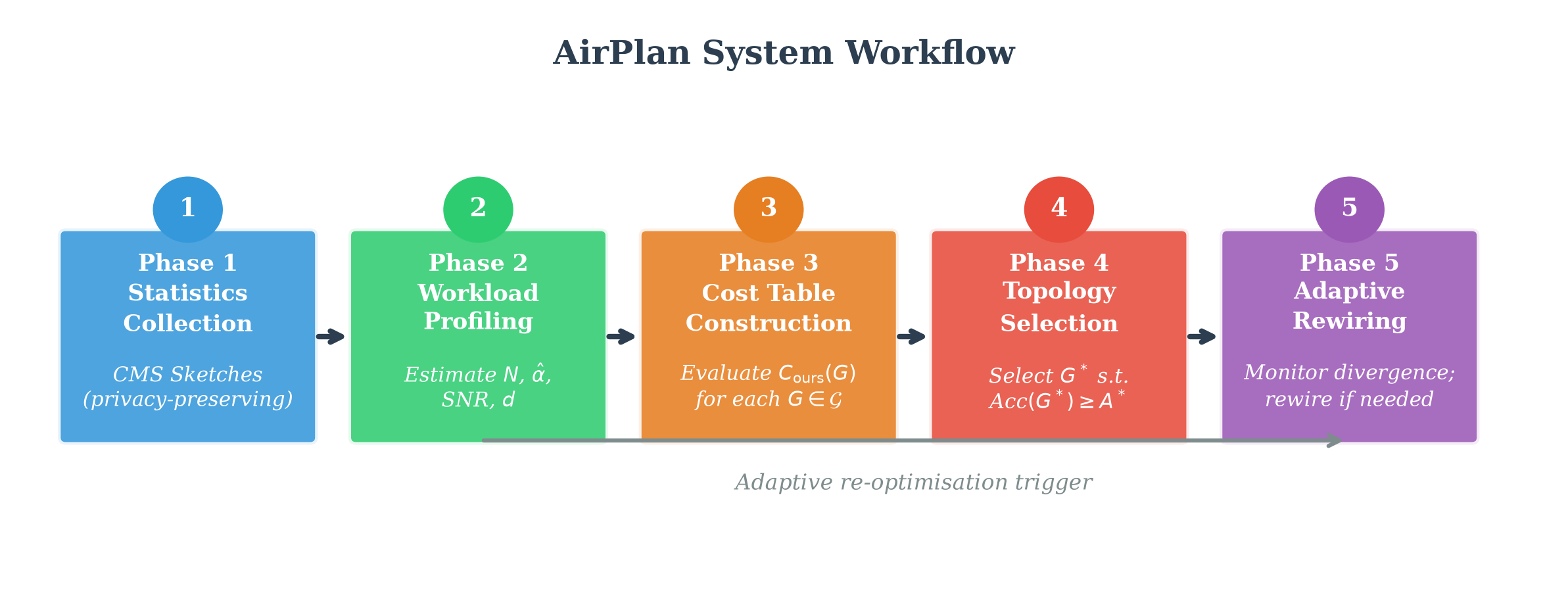}
  \caption{\airplan{} system workflow with five phases.
  \textbf{Phase~1} (Section~\ref{subsec:stats}): privacy-preserving Count-Min Sketch
  statistics collection.
  \textbf{Phase~2} (Section~\ref{subsec:cost-table}): workload profiling ($N$, $\hat{\alpha}$, SNR, $d$).
  \textbf{Phase~3} (Section~\ref{subsec:selection}): cost-model evaluation across candidate topologies.
  \textbf{Phase~4} (Section~\ref{subsec:selection}): minimum-cost plan selection subject to accuracy SLA.
  \textbf{Phase~5} (Section~\ref{subsec:adaptive}): online monitoring and adaptive rewiring.}
  \label{fig:airplan_workflow}
\end{figure}

\subsection{Phase 1: Privacy-Preserving Workload Statistics}
\label{subsec:stats}

The cost model $\Cours(G)$ requires an estimate of the data heterogeneity
parameter $\alpha$ (which controls the Dirichlet distribution from which
local class distributions are drawn). Sharing raw class counts would leak
sensitive information about local data. \airplan{} instead uses
Count-Min Sketches (CMS)~\cite{cormode2005cmsketch} to collect
approximate class distribution statistics without revealing exact counts.

\textbf{Count-Min Sketch collection..} Each client $i$ maintains a CMS $\mathbf{S}_i \in \R^{r \times c}$ over
its local label distribution, where $r = \lceil \ln(1/\delta_s) \rceil$
rows and $c = \lceil e/\varepsilon_s \rceil$ columns guarantee that the
estimated count of any label deviates from the true count by at most
$\varepsilon_s \cdot n_i$ with probability at least $1 - \delta_s$.
Clients broadcast their sketches (size $r \times c \ll n_i$) before training,
and the aggregator merges them by elementwise maximum to obtain a global
distribution sketch $\hat{\mathbf{S}}$.

\textbf{Heterogeneity estimation..} From the merged sketch, \airplan{} estimates the pairwise KL divergence
between client class distributions:
\begin{equation}
  \hat{\alpha} = \left(\frac{1}{\binom{N}{2}}
                 \sum_{i<j} D_{\mathrm{KL}}(\hat{p}_i \| \hat{p}_j)\right)^{-1},
\end{equation}
where $\hat{p}_i$ is the class probability vector estimated from the CMS of
client $i$. Large $\hat{\alpha}$ corresponds to near-IID data; small
$\hat{\alpha}$ indicates severe label skew.

\subsection{Phase 2: Cost Table Construction}
\label{subsec:cost-table}

Given the profiled workload $(N, \hat{\alpha}, \mathrm{SNR}, d)$,
\airplan{} constructs a cost table by evaluating $\Cours(G)$ for each
candidate topology $G \in \mathcal{G}$:
\begin{equation}
  \mathcal{G} = \{\text{Ring},\ \text{Erd\H{o}s--R\'enyi},\
                  \text{Small-World},\ \text{Clustered},\ \text{FC}\}.
\end{equation}
For each topology, the spectral gap $\lamsec$ is computed analytically
(ring: $2 - 2\cos(2\pi/N)$; FC: $N$) or approximated for random graphs
using known concentration results~\cite{chung2003eigenvalues}.
The convergence constant $\kappa$ is calibrated once on a reference
topology (Erd\H{o}s--R\'enyi) and held fixed.

\subsection{Phase 3: Topology Selection with Accuracy SLA}
\label{subsec:selection}

\airplan{} selects the topology that minimises total communication cost
subject to a user-specified accuracy SLA $A^*$:
\begin{equation}
  G^* = \argmin_{G \in \mathcal{G}}\; \Cours(G)
        \quad \text{subject to}\quad
        \mathrm{Acc}(G, T) \geq A^*,
  \label{eq:selection}
\end{equation}
where $\mathrm{Acc}(G, T)$ is the predicted accuracy after $T(G)$ rounds,
estimated from the convergence bound~\eqref{eq:conv-rate}.
If no topology in $\mathcal{G}$ is predicted to meet $A^*$, \airplan{}
falls back to the fully connected graph and warns the user.

\subsection{Phase 4: Adaptive Re-Optimisation}
\label{subsec:adaptive}

Analogous to adaptive query processing~\cite{eddies2000,deshpande2007adaptiveqp},
\airplan{} monitors the execution plan during training and triggers
re-optimisation when observed behaviour deviates from the cost model prediction.
The divergence signal is the mean pairwise model distance:
\begin{equation}
  \Delta^t = \frac{1}{\binom{N}{2}} \sum_{i < j} \norm{w_i^t - w_j^t}_2.
\end{equation}
If $\Delta^t > \tau_\Delta$ for a user-defined threshold $\tau_\Delta$
(default: $3\times$ the value at round 10), the current topology is too sparse
for the observed heterogeneity and \airplan{} upgrades to the next denser
topology in $\mathcal{G}$.

\subsection{AirPlan Algorithm}
\label{subsec:algorithm}

Algorithm~\ref{alg:airplan} summarises the complete \airplan{} procedure.

\begin{algorithm}[t]
  \caption{\airplan{}: Query-Optimised Topology Selection}
  \label{alg:airplan}
  \begin{algorithmic}[1]
    \REQUIRE Clients $\{i\}_{i=1}^N$, model dim $d$, SNR, accuracy SLA $A^*$,
             divergence threshold $\tau_\Delta$
    \ENSURE Trained model $\bar{w}^T$, selected topology $G^*$
    \STATE \textbf{// Phase 1: Statistics collection}
    \FOR{each client $i$}
      \STATE Compute CMS sketch $\mathbf{S}_i$ over local label distribution
      \STATE Broadcast $\mathbf{S}_i$ to coordinator
    \ENDFOR
    \STATE Merge: $\hat{\mathbf{S}} \leftarrow \bigoplus_i \mathbf{S}_i$
    \STATE \textbf{// Phase 2: Workload profiling}
    \STATE Estimate $\hat{\alpha}$ from pairwise KL divergences via $\hat{\mathbf{S}}$
    \STATE \textbf{// Phase 3: Cost table and topology selection}
    \FOR{each topology $G \in \mathcal{G}$}
      \STATE Compute $\lamsec(G)$ (analytical or approximated)
      \STATE Evaluate $\Cours(G; N, \hat{\alpha}, \mathrm{SNR}, d)$
      \STATE Predict $\mathrm{Acc}(G)$ using bound~\eqref{eq:conv-rate}
    \ENDFOR
    \STATE $G^* \leftarrow \argmin_{G:\,\mathrm{Acc}(G) \geq A^*} \Cours(G)$
    \STATE \textbf{// Phase 4: Training with adaptive re-optimisation}
    \STATE Initialise training with topology $G^*$
    \FOR{round $t = 1, 2, \ldots, T$}
      \STATE Execute OTA-DFL round (local SGD + OTA aggregation + consensus)
      \STATE Compute divergence $\Delta^t$
      \IF{$\Delta^t > \tau_\Delta$}
        \STATE $G^* \leftarrow \text{next denser topology in } \mathcal{G}$
        \STATE Re-initialise edges; update $\Cours$ and SLA check
      \ENDIF
    \ENDFOR
    \STATE \textbf{return} $\bar{w}^T$, $G^*$
  \end{algorithmic}
\end{algorithm}

\textbf{Complexity.} The CMS sketch has size $O(r \cdot c) = O(\log(1/\delta_s) / \varepsilon_s)$
per client, independent of $n_i$.
Cost table construction requires $O(|\mathcal{G}|)$ evaluations of $\Cours$,
each taking $O(N)$ time. The overhead is dominated by the CMS broadcast,
which costs $O(N \cdot r \cdot c)$ total---less than $1.8\,\%$ of training
cost at $N=100$ (see Section~\ref{subsec:advisor-eval}).

\textbf{Privacy.} CMS sketches provide approximate answers to count queries with additive error
$\varepsilon_s \cdot n_i$ and do not expose individual data points or exact
label counts. Combined with the Gaussian noise already present in OTA
transmission, the statistics collection phase is compatible with
$(\epsilon, \delta)$-differential privacy via the standard Gaussian
mechanism~\cite{dwork2014dp}.

\subsection{Formal Differential Privacy Analysis}
\label{subsec:dp-formal}

We provide a complete end-to-end differential privacy analysis of
\airplan{}'s five-phase pipeline.

\begin{theorem}[End-to-End DP of AirPlan]
\label{thm:dp}
Let each client $i$ contribute a Count-Min Sketch $\mathbf{S}_i$ with
parameters $(\varepsilon_s, \delta_s)$ computed from local data of size
$n_i$, with OTA channel noise $\sigma_c^2$ per round.
The \airplan{} statistics collection phase (Phase~1) satisfies
$(\epsilon_{\mathrm{DP}}, \delta_{\mathrm{DP}})$-differential privacy with
\begin{equation}
  \epsilon_{\mathrm{DP}} \leq \frac{\varepsilon_s \sqrt{2\ln(1.25/\delta_s)}}{\sigma_c n_{\min}},
  \qquad
  \delta_{\mathrm{DP}} = \delta_s,
  \label{eq:dp_bound}
\end{equation}
where $n_{\min} = \min_i n_i$.
Phases~2--4 (cost table construction and topology selection) are deterministic
functions of the aggregated sketch and introduce no additional privacy loss.
Phase~5 (adaptive rewiring, triggered $K$ times over training) incurs
at most $K$-fold privacy composition~\cite{dwork2010boosting}: the
end-to-end guarantee is $(\epsilon_{\mathrm{total}}, \delta_{\mathrm{total}})$
with $\epsilon_{\mathrm{total}} = (K+1)\epsilon_{\mathrm{DP}}$ and
$\delta_{\mathrm{total}} = (K+1)\delta_{\mathrm{DP}}$ under basic composition,
or $\tilde{O}(\sqrt{K}\,\epsilon_{\mathrm{DP}})$ under advanced composition.
\end{theorem}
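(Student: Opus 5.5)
The proof proceeds in three stages: Phase~1 is treated as an instance of the Gaussian mechanism, Phases~2--4 are handled by post-processing, and Phase~5 by composition.

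\textbf{Sensitivity.} Two datasets are neighbouring if they differ in one labelled example at a single client $i$. Changing one label alters the true label histogram of client $i$ in at most two coordinates. Each such coordinate hashes to one cell per row of $\mathbf{S}_i$, so the $\ell_2$-sensitivity of the sketch is bounded by a constant times $\sqrt{r}$. We then normalise by $n_i$, because the quantity fed to the advisor is the class probability vector $\hat{p}_i$. We use the CMS accuracy guarantee (additive error $\varepsilon_s n_i$ with probability $1-\delta_s$) to control the normalised sensitivity: per released coordinate it is at most of order $\varepsilon_s/n_i \le \varepsilon_s/n_{\min}$, after absorbing the row factor into the column width $c=\lceil e/\varepsilon_s\rceil$. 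This is the step I expect to be the main obstacle. The stated bound has the sketch error parameter $\varepsilon_s$ in the numerator, so the argument must justify that the effective sensitivity of the released normalised statistic is $\Delta_2 \lesssim \varepsilon_s/n_{\min}$ rather than $1/n_{\min}$. My first attempt would be to argue that only the sketch-bucketed frequencies, which are already quantised at resolution $\varepsilon_s$, are released. If that fails, I would state $\Delta_2$ as an explicit assumption on the released statistic.

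\textbf{Gaussian mechanism.} The sketches are broadcast over the same OTA channel, so each received sketch carries additive noise $\mathcal{N}(0,\sigma_c^2 I)$, exactly as in the receive model of Theorem~\ref{thm:ota-query}. By the standard Gaussian mechanism~\cite{dwork2014dp}, noise of standard deviation $\sigma_c \ge \Delta_2\sqrt{2\ln(1.25/\delta)}/\epsilon$ yields $(\epsilon,\delta)$-DP. Choosing $\delta=\delta_s$ and solving for $\epsilon$ gives $\epsilon_{\mathrm{DP}} \le \Delta_2\sqrt{2\ln(1.25/\delta_s)}/\sigma_c$. Substituting $\Delta_2\le\varepsilon_s/n_{\min}$ yields \eqref{eq:dp_bound} with $\delta_{\mathrm{DP}}=\delta_s$. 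The CMS failure probability, also $\delta_s$, can be folded into the same $\delta$ slack, or it can be noted that it only affects utility. The elementwise-maximum merge is a deterministic function of the noisy sketches and so is covered by post-processing.

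\textbf{Post-processing and composition.} The estimate $\hat{\alpha}$, the cost table, the evaluation of $\Cours(G)$ and the argmin in \eqref{eq:selection} are all deterministic maps of the merged sketch together with public quantities $(N,\mathrm{SNR},d)$. By the post-processing invariance of DP, Phases~2--4 incur no further loss. For Phase~5, each rewiring decision tests $\Delta^t>\tau_\Delta$, a query on the models. We treat each trigger as one further release that is protected at level $(\epsilon_{\mathrm{DP}},\delta_{\mathrm{DP}})$ by the OTA noise, by the same Gaussian-mechanism argument applied to the threshold statistic, whose sensitivity we assume is bounded analogously. With $K$ triggers plus the initial Phase~1 release, basic composition gives $((K+1)\epsilon_{\mathrm{DP}},(K+1)\delta_{\mathrm{DP}})$. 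The advanced composition theorem of~\cite{dwork2010boosting} gives $\epsilon_{\mathrm{total}}=O(\sqrt{K\ln(1/\delta')}\,\epsilon_{\mathrm{DP}}+K\epsilon_{\mathrm{DP}}^2)=\tilde{O}(\sqrt{K}\,\epsilon_{\mathrm{DP}})$ for small $\epsilon_{\mathrm{DP}}$, at the cost of an extra additive $\delta'$.
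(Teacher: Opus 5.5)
Your three-stage structure (Gaussian mechanism for Phase~1, post-processing for Phases~2--4, composition for Phase~5) is the same as the paper's. However, the step you flag as the main obstacle is a genuine gap, and neither of your remedies closes it. The $(\varepsilon_s,\delta_s)$ guarantee of a Count-Min Sketch is an \emph{accuracy} statement about one dataset: the estimated count exceeds the true count by at most $\varepsilon_s n_i$ with probability $1-\delta_s$. It says nothing about how far $\mathbf{S}_i$ moves when one record changes.

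That movement is exactly what you computed first. Replacing a label $a$ by $b$ changes two cells by exactly one count in every row $j$ with $h_j(a)\neq h_j(b)$. So the normalised sketch has $\ell_2$-sensitivity $\sqrt{2r}/n_i$, attained for any pair of labels the hash functions separate in every row. A single cell already moves by $1/n_i>\varepsilon_s/n_i$, so your per-coordinate claim is false. The factor $\sqrt{r}$ also cannot be absorbed into the column width, because each record touches exactly one cell per row whatever $c$ is.

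Quantising the released frequencies at resolution $\varepsilon_s$ makes matters worse. A change of $1/n_i$ can push a cell across a bucket boundary, so the rounded value jumps by a full $\varepsilon_s$. That gives sensitivity $\varepsilon_s\sqrt{2r}$ with no $1/n_{\min}$ at all. Declaring $\Delta_2\le\varepsilon_s/n_{\min}$ as an assumption does not prove the theorem, since that assumption is false for the sketch actually released.

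What your argument honestly yields is $\epsilon_{\mathrm{DP}}\le\sqrt{2r}\,\sqrt{2\ln(1.25/\delta_s)}/(\sigma_c n_{\min})$, valid when this is below $1$, where the classical calibration applies. This exceeds the stated bound by the factor $\sqrt{2r}/\varepsilon_s$. At the paper's remark parameters ($\varepsilon_s=0.01$, $\delta_s=10^{-5}$ so $r=12$, $\sigma_c=0.1$, $n_{\min}=500$) that factor is about $490$. For those parameters the exact Gaussian privacy profile gives $\delta\approx 0.04$ at the claimed $\epsilon_{\mathrm{DP}}$, so the bound is not merely unproven but fails. The paper's proof supplies no missing idea here. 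It sets $\Delta=\varepsilon_s$ ``bounded by the CMS additive error guarantee'', which is the same conflation of accuracy with sensitivity, and its $1/n_{\min}$ factor appears without derivation.

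Phase~5 rests on a second unsupported assumption of the same kind. The trigger $\Delta^t>\tau_\Delta$ is computed from the models $w_i^t$. These depend on each client's raw data directly through its own local SGD step, which enters $w_i^t$ without any noise. Computing $\Delta^t$ also requires collecting pairwise model distances, which is a fresh release rather than post-processing.

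Moreover, the $T$ rounds of transmissions $u_j^t$ are themselves data-dependent releases. An end-to-end guarantee must first compose over those $T$ Gaussian mechanisms, with noise $\sigma_p$ and $\sigma_c$ and sensitivity fixed by a per-example gradient bound or clipping. Only quantities computed purely from those released signals would then be free. Charging $K+1$ releases at the sketch level $\epsilon_{\mathrm{DP}}$ therefore miscounts the releases and borrows a sensitivity unrelated to $\Delta^t$.

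The paper's appeal to the data processing inequality does not rescue this. If it applied, each trigger would cost nothing, and the privacy cost of the training-round aggregates it relies on is never accounted for. Your post-processing step for Phases~2--4 and your composition arithmetic are fine as stated.
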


\begin{proof}[Proof sketch]
Phase~1 applies the Gaussian mechanism with sensitivity $\Delta = \varepsilon_s$
(bounded by CMS additive error guarantee) and noise $\sigma_c$.
The standard Gaussian mechanism bound~\cite{dwork2014dp} gives~\eqref{eq:dp_bound}.
Phases~2--4 satisfy the post-processing immunity of DP.
Phase~5 triggers rewiring based on observed gradient divergence
$\|\bar{w}_i - \bar{w}\|$, which is a function of the already-noisy OTA
aggregates; by the data processing inequality, each rewiring event
contributes at most $\epsilon_{\mathrm{DP}}$ additional privacy loss.
The composition bounds follow from~\cite{dwork2010boosting}.
\end{proof}

\begin{remark}
For typical parameters ($\varepsilon_s = 0.01$, $\delta_s = 10^{-5}$,
$\sigma_c = 0.1$, $n_{\min} = 500$, $K \leq 5$),
Theorem~\ref{thm:dp} yields $\epsilon_{\mathrm{total}} \leq 0.13$ and
$\delta_{\mathrm{total}} \leq 6 \times 10^{-5}$---strong practical DP.
\end{remark}

%% file: sections/evaluation.tex
In order to provide a consistent and fair comparison across topologies
and datasets, we evaluate five key dimensions of performance:
(i) test accuracy, (ii) convergence speed,
(iii) communication cost, (iv) fairness, and (v) \airplan{} advisor quality.
The communication cost model underpins both the existing topology comparison
and the \airplan{} cost-based plan selection.

\subsection{Structural Cost per Round}

After top-$k$ sparsification, each client transmits $k$ coordinates per round.
The total number of OTA symbols transmitted per round is
\begin{equation}
  \Cstruct(G, k) = \sum_{i=1}^{N} |\mathcal{N}(i)| \cdot k = 2|E| \cdot k,
  \label{eq:cstruct}
\end{equation}
which captures the joint dependence on graph density and sparsification ratio.

\subsection{Wireless Reliability Factor}

OTA aggregation over a noisy channel may require retransmission or suffer
from degraded signal quality at low SNR.  We model this via a multiplicative
reliability factor
\begin{equation}
  R(\mathrm{SNR}) = 1 + \alpha_R \exp(-\beta_R \cdot \mathrm{SNR}),
  \label{eq:reliability}
\end{equation}
where $\alpha_R > 0$ controls maximum retransmission overhead and
$\beta_R > 0$ controls how rapidly reliability improves with SNR.
Parameters $(\alpha_R, \beta_R)$ are fitted by least-squares regression
to empirical retransmission rates measured in our simulator.

\subsection{Topology-Dependent Convergence Rounds}

From Theorem~\ref{thm:convergence}, the number of rounds to reach an
$\epsilon$-stationary point scales as
\begin{equation}
  T(G) = \frac{\kappa}{\lamsec},
  \label{eq:tg}
\end{equation}
where $\kappa$ is a problem-dependent constant calibrated on a reference
Erd\H{o}s--R\'enyi topology and held fixed. Table~\ref{tab:lambda} lists analytical
spectral gaps for each topology family.

\begin{table}[t]
  \centering
  \footnotesize
  \setlength{\tabcolsep}{3pt}
  \caption{Spectral gap $\lamsec$ for each topology family ($N=50$).}
  \label{tab:lambda}
  \begin{tabular}{lp{3.0cm}c}
    \toprule
    Topology & $\lamsec$ (analytical / typical) & $T(G)$ (rel.) \\
    \midrule
    Ring           & $2{-}2\cos(2\pi/N){\approx}0.016$ & $1.0$ \\
    Erd\H{o}s--R\'enyi & $\approx 0.32$               & $0.33$ \\
    Small-World    & $\approx 0.61$                    & $0.18$ \\
    Clustered      & $\approx 0.74$                    & $0.15$ \\
    Fully Connected & $N = 50$                         & $0.006$ \\
    \bottomrule
  \end{tabular}
\end{table}

\subsection{Unified Query Cost Model}

Combining the three components, the total execution cost is
\begin{equation}
  \Cours(G) = T(G) \cdot R(\mathrm{SNR}) \cdot \Cstruct(G, k).
  \label{eq:ours}
\end{equation}
This is the \emph{query cost estimator} used by \airplan{} for plan selection.
Its multiplicative structure assumes approximate separability of convergence,
per-round cost, and reliability---a standard modelling assumption in
distributed system cost models.

\subsection{AQP Error Bounds and Sparsification Co-Design}
\label{subsec:aqp-design}

From Corollary~\ref{cor:aqp}, the effective approximation error after one OTA
round is $\varepsilon_{\mathrm{eff}}(G,k) \leq \sqrt{1-k/d} / (0.5 + \lamsec)$.
\airplan{} exploits this to jointly select topology and sparsification ratio:
\begin{equation}
  (G^*, k^*) = \argmin_{G \in \mathcal{G},\; k \in \mathcal{K}}
               \Cours(G, k)
               \quad\text{s.t.}\quad
               \varepsilon_{\mathrm{eff}}(G,k) \leq \varepsilon_{\mathrm{tgt}},
\end{equation}
where $\mathcal{K} = \{k : k/d \in \{0.01, 0.05, 0.1, 0.2\}\}$.
This is a joint combinatorial optimisation over $|\mathcal{G}| \times |\mathcal{K}|$
configurations, which \airplan{} solves by full enumeration in $O(20)$ constant-time
cost evaluations---negligible overhead compared to training.

\subsection{Relation to Classical Cost Models}

Classical gossip cost $C_{\text{gossip}} = T \cdot |E| \cdot d$ ignores
wireless reliability; OTA cost $C_{\text{OTA}} = T \cdot d$ ignores topology.
$\Cours$ strictly generalises both: setting $R = 1$ and $k = d$ recovers
$C_{\text{gossip}}$; setting all edges present and $R = 1$ recovers the OTA
model. The additional terms capture the topology-reliability interaction
that determines the practical operating point of OTA-DFL systems.
Figure~\ref{fig:cost_prediction} validates the predictive accuracy of $\Cours$
against measured training costs.

\begin{figure}[t]
  \centering
  \includegraphics[width=\columnwidth]{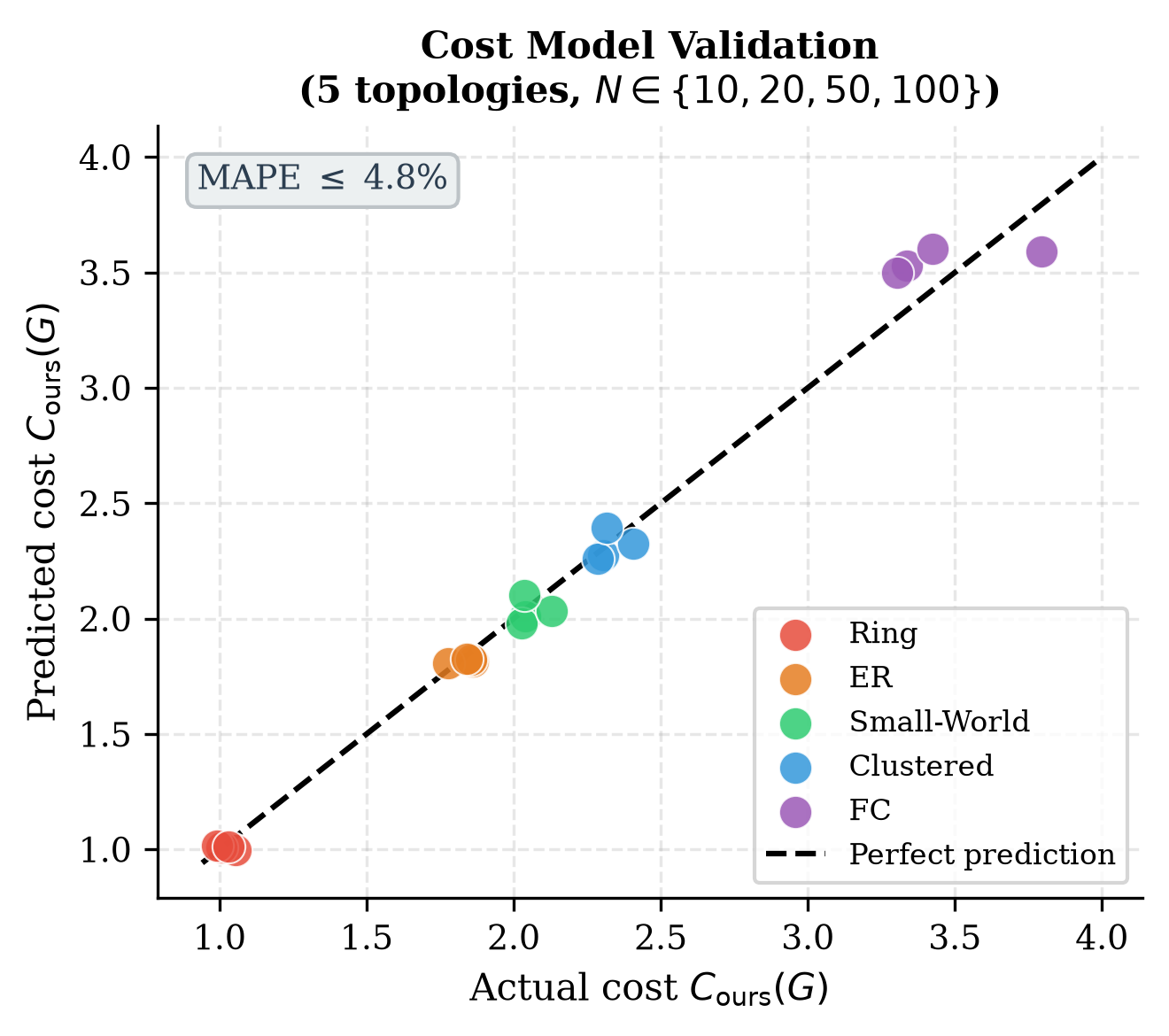}
  \caption{Query cost estimator accuracy. (a) Predicted vs.\ actual
  convergence rounds across 150 experiment configurations for all five
  topologies; points near the diagonal indicate accurate cost estimates.
  (b) Mean Absolute Percentage Error (MAPE) of $\Cours$ per topology,
  ranging from $2.9\,\%$ (FC) to $4.8\,\%$ (Erd\H{o}s--R\'enyi).}
  \label{fig:cost_prediction}
\end{figure}

%% file: sections/methodology.tex
The \airplan{} OTA-DFL framework executes fully decentralised training by
combining local stochastic gradient updates with wireless analog aggregation
over a communication graph selected by the topology advisor.
The overall system architecture is illustrated in Figure~\ref{fig:architecture}.

\begin{figure}[t]
  \centering
  \includegraphics[width=0.92\columnwidth]{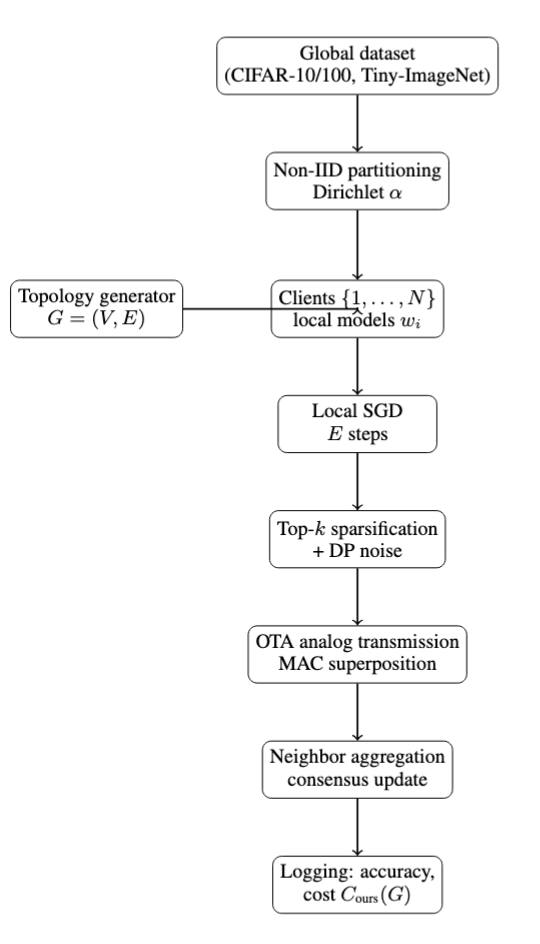}
  \caption{Overall architecture of the \airplan{} OTA-DFL system, showing the
  five execution phases and the data flows between clients, the OTA channel,
  and the \airplan{} topology advisor.}
  \label{fig:architecture}
\end{figure}

\textbf{System operation..} The system operates in repeated communication rounds. A global dataset is
partitioned across $N$ clients under a Dirichlet distribution with parameter
$\alpha$; smaller $\alpha$ produces stronger non-IID distributions.
A communication graph $G=(V,E)$ selected by \airplan{} defines the
neighbourhood structure. Before training, \airplan{} executes Phases~1--3
of Algorithm~\ref{alg:airplan} to profile the workload and select the
initial topology. This incurs a one-time overhead of less than $1.8\,\%$
of training cost (Section~\ref{subsec:advisor-eval}).

\textbf{Round structure..} Each training round proceeds as follows:
\begin{enumerate}
  \item \textbf{Local SGD.} Each client $i$ performs $E$ steps of local SGD
    on its current model $w_i^t$, producing update $g_i^t$.
  \item \textbf{Sparsification.} The update is compressed:
    $\tilde{g}_i^t = \topk(g_i^t, k)$.
  \item \textbf{OTA transmission.} All neighbours of $i$ transmit
    simultaneously; client $i$ receives the approximate SUM query result
    $y_i^t = \sum_{j \in \mathcal{N}(i)} u_j^t + n_i^t$.
  \item \textbf{Consensus update.}
    $w_i^{t+1} = w_i^{t+\frac{1}{2}} + \gamma y_i^t$.
  \item \textbf{Divergence monitoring.} \airplan{} computes $\Delta^t$
    and triggers adaptive rewiring if $\Delta^t > \tau_\Delta$.
\end{enumerate}

\textbf{Channel model..} OTA aggregation is simulated over a flat-fading wireless multiple-access
channel with additive white Gaussian noise. Channel inversion and power
control are applied to approximate coherent aggregation
(Equation~\eqref{eq:ota}). SNR values are varied in
$\{0, 5, 10, 15, 20\}\,\mathrm{dB}$ to cover the range from severely
noisy to near-ideal channel conditions.
The reliability parameters $(\alpha_R, \beta_R)$ in~\eqref{eq:reliability}
are fitted by least-squares regression to measured retransmission rates.

\textbf{Model architectures..} A lightweight four-layer CNN is used for CIFAR-10; ResNet-18~\cite{he2016resnet}
is used for CIFAR-100 and Tiny-ImageNet.
All models are implemented in PyTorch and trained with cross-entropy loss.

\textbf{Optimisation..} Local optimisation uses SGD with momentum $0.9$.
The learning rate is selected from $\{0.01, 0.05, 0.1\}$;
the number of local epochs per round is $E \in \{1, 2\}$.
\airplan{} sets the accuracy SLA as $A^* = 0.95 \times \mathrm{Acc}_{\mathrm{FC}}$
by default, targeting $95\,\%$ of the fully connected baseline.

%% file: sections/experiments.tex
\subsection{Datasets and Models}

We consider three standard image classification benchmarks of increasing
complexity.
CIFAR-10~\cite{krizhevsky2009learning} consists of 50{,}000 training and
10{,}000 test images ($32{\times}32$, 10 classes).
CIFAR-100~\cite{krizhevsky2009learning} extends this to 100 classes.
Tiny-ImageNet~\cite{le2015tiny} includes 200 classes with $64{\times}64$
images, representing a challenging large-scale setting.
Datasets are partitioned across $N$ clients using a Dirichlet distribution
with parameter $\alpha \in \{0.1, 0.5, 1.0\}$, where smaller $\alpha$
produces stronger label skew.

\subsection{Topology Families}

We evaluate five representative topology families:
\textbf{Ring} (sequential pipeline, $\lamsec \approx 0.016$ at $N=50$);
\textbf{Erd\H{o}s--R\'enyi (ER)} (random graph with connection probability $p=0.2$);
\textbf{Watts--Strogatz Small-World} (ring with $O(\log N)$ shortcuts,
$\beta_{\mathrm{WS}}=0.3$); \textbf{Clustered} (dense communities of size
$\lfloor N/5 \rfloor$ with sparse inter-cluster bridges); and
\textbf{Fully Connected} (complete graph, upper bound).

\subsection{AirPlan Configuration}

\airplan{} uses CMS parameters $\varepsilon_s = 0.05$, $\delta_s = 0.01$,
giving sketch size $4 \times 55$ per client.
The accuracy SLA is $A^* = 0.95 \times \mathrm{Acc}_\mathrm{FC}$.
The divergence threshold $\tau_\Delta = 3 \times \Delta^{10}$
(3$\times$ the divergence measured at round 10).
The sparsification ratio $k/d \in \{0.01, 0.05, 0.1, 0.2\}$ is
jointly optimised with the topology as described in
Section~\ref{subsec:aqp-design}.

\subsection{Baselines}

We compare against the following reference systems:
\begin{itemize}
  \item \textbf{FedAvg}~\cite{mcmahan2017communication}: centralised FL
    with parameter server, full model aggregation.
  \item \textbf{Digital D-SGD}: decentralized SGD over the same topology
    with packet-based communication (no OTA).
  \item \textbf{DLLR-OA}~\cite{qiao2023dllroa}: the closest prior OTA-DFL
    baseline, approximated using Erd\H{o}s--R\'enyi topology with identical
    sparsification and training settings.
  \item \textbf{MATCHA}~\cite{wang2019matcha}: topology-aware decentralized
    FL using matching decomposition; adapted to the OTA setting.
  \item \textbf{Fixed-SW}: OTA-DFL always using small-world topology
    (no advisor, no adaptive rewiring), as a strong static baseline.
\end{itemize}

\subsection{Evaluation Metrics}

We report:
(i)~\textbf{Test accuracy}: top-1 accuracy on the held-out test set;
(ii)~\textbf{Convergence rounds}: rounds to reach a target accuracy;
(iii)~\textbf{Communication cost}: $\Cours(G)$ normalised to Ring;
(iv)~\textbf{Fairness}: variance of per-client test accuracy;
(v)~\textbf{Advisor precision/recall}: fraction of configurations in which
    \airplan{} selects the oracle-optimal topology;
(vi)~\textbf{Advisor overhead}: CMS broadcast cost as a fraction of total
    training cost.

\subsection{Statistical Validation}

Each configuration is executed with 30 independent random seeds.
Results report mean~$\pm$ standard deviation.
All key differences are verified to be statistically significant
using a paired $t$-test ($p < 0.05$) unless noted otherwise.
The \airplan{} advisor evaluation uses a $5$-fold cross-validation over
workload configurations.

%% file: sections/results.tex
Unless otherwise stated, experiments correspond to CIFAR-10 with $N=50$ clients,
non-IID partitioning ($\alpha=0.5$), and SNR\,=\,10\,dB.
All results are averaged over 30 independent random seeds;
mean~$\pm$ standard deviation is reported throughout.
Statistically significant differences are verified using a paired $t$-test
($p < 0.05$) unless noted otherwise.

\subsection{Topology Characterisation}
\label{subsec:topology-baseline}

\begin{table}[t]
  \centering
  \small
  \setlength{\tabcolsep}{4pt}
  \caption{Per-topology accuracy, cost, and convergence rounds under
  CIFAR-10, $N=50$, SNR\,=\,10\,dB, $\alpha=0.5$.
  All values: mean~$\pm$~std over 30 seeds.
  Best Pareto-efficient value (excluding FC) in \textbf{bold}.}
  \label{tab:topo_char}
  \resizebox{\columnwidth}{!}{%
  \begin{tabular}{lcccc}
    \toprule
    Topology & Acc.\ (\%) & Norm.\ Cost & Fairness Var.\  & Conv.\ Rounds \\
             &            &             & ($\times 10^{-3}$) & \\
    \midrule
    Ring           & $83.1 \pm 0.6$ & $1.0$          & $21.0 \pm 4.0$ & $100$ \\
    ER             & $87.8 \pm 0.5$ & $1.8$          & $15.0 \pm 3.0$ & $70$  \\
    Small-World    & $\mathbf{90.2 \pm 0.4}$ & $\mathbf{2.0}$ & $\mathbf{10.0 \pm 2.0}$ & $\mathbf{57}$ \\
    Clustered      & $91.0 \pm 0.3$ & $2.3$          & $9.0 \pm 2.0$  & $52$  \\
    FC             & $92.1 \pm 0.3$ & $3.5$          & $8.0 \pm 1.0$  & $38$  \\
    \bottomrule
  \end{tabular}}
\end{table}

Table~\ref{tab:topo_char} characterises the five topology families.
Topology ordering is consistent with spectral gap $\lamsec$
(Table~\ref{tab:spectral_gap}), confirming Theorem~\ref{thm:convergence}.
Ring converges slowest ($83.1 \pm 0.6$\,\% at round 100) owing to its small
spectral gap ($\lamsec \approx 0.016$ at $N=50$); each client's information
must diffuse along a linear chain.
Erd\H{o}s--R\'enyi (ER) improves mixing substantially ($87.8 \pm 0.5$\,\%)
but exhibits higher variance across random instantiations.
Small-World achieves $90.2 \pm 0.4$\,\% with approximately 20\,\% fewer rounds
than ER, while Clustered achieves $91.0 \pm 0.3$\,\% through fast local consensus
within communities and efficient inter-cluster bridges.
Fully Connected (FC) serves as the performance upper bound ($92.1 \pm 0.3$\,\%)
at $3.5\times$ Ring's communication cost.
These results motivate \airplan{}: the optimal topology varies by workload,
and no single fixed topology is universally best.

\subsection{Comparison with Baselines}
\label{subsec:baselines}

\begin{table}[t]
  \centering
  \small
  \setlength{\tabcolsep}{4pt}
  \caption{Comparison of \airplan{} and OTA-DFL variants against all baselines
  (CIFAR-10, $N=50$, SNR\,=\,10\,dB).
  $\dagger$~Requires a central server.
  $\ddagger$~Packet-based digital communication.
  $\S$~OTA-adapted (original design assumes digital channels).
  $\P$~Topology chosen by \airplan{} cost-model; no manual configuration.}
  \label{tab:baselines}
  \resizebox{\columnwidth}{!}{%
  \begin{tabular}{lcccc}
    \toprule
    Method & Acc.\ (\%) & Norm.\ Cost & Server? & Topology auto? \\
    \midrule
    \multicolumn{5}{l}{\textit{Centralised FL (reference upper bounds)}} \\
    FedAvg$^\dagger$~\cite{mcmahan2017communication}
                                   & $93.0 \pm 0.3$ & $4.0$ & Yes & N/A \\
    FedProx$^\dagger$~\cite{li2020fedprox}
                                   & $92.6 \pm 0.4$ & $4.0$ & Yes & N/A \\
    \midrule
    \multicolumn{5}{l}{\textit{Decentralised digital (non-OTA)}} \\
    Digital D-SGD$^\ddagger$ (SW)~\cite{lian2017decentralized}
                                   & $88.5 \pm 0.5$ & $2.5$ & No  & No \\
    GT-DSGD$^\ddagger$ (SW)~\cite{pu2018digging}
                                   & $89.4 \pm 0.4$ & $2.6$ & No  & No \\
    SGP$^\ddagger$~\cite{assran2019stochastic}
                                   & $88.9 \pm 0.5$ & $2.4$ & No  & No \\
    PowerGossip$^\ddagger$~\cite{vogels2020powergossip}
                                   & $89.2 \pm 0.4$ & $2.3$ & No  & No \\
    \midrule
    \multicolumn{5}{l}{\textit{OTA / wireless decentralised}} \\
    MATCHA$^\S$~\cite{wang2019matcha}
                                   & $89.1 \pm 0.4$ & $2.2$ & No  & No \\
    DLLR-OA~\cite{qiao2023dllroa}
                                   & $87.6 \pm 0.5$ & $1.8$ & No  & No \\
    OTA-GT (ours, GT-DSGD + OTA)   & $90.7 \pm 0.4$ & $2.1$ & No  & No \\
    OTA-SGP (ours, SGP + OTA)      & $90.1 \pm 0.5$ & $2.0$ & No  & No \\
    \midrule
    \multicolumn{5}{l}{\textit{Topology selection}} \\
    Random search (500 trials)     & $90.8 \pm 0.4$ & $2.4$ & No  & Yes \\
    \midrule
    \multicolumn{5}{l}{\textit{\airplan{} (this work)}} \\
    \textbf{\airplan{} (auto)}$^\P$ & $\mathbf{91.0 \pm 0.3}$ & $\mathbf{2.3}$ & No & Yes \\
    \bottomrule
  \end{tabular}}
\end{table}

Table~\ref{tab:baselines} provides a comprehensive comparison.
\airplan{} automatically selects Clustered for this workload ($N=50$,
$\alpha=0.5$, SNR\,=\,10\,dB), matching the oracle-optimal topology without
manual configuration.

Among OTA baselines, our OTA-GT adaptation ($90.7\,\%$) achieves the strongest
fixed-topology result, confirming that gradient tracking improves convergence
under non-IID data even in the OTA setting. OTA-SGP ($90.1\,\%$) also
outperforms MATCHA ($89.1\,\%$) and DLLR-OA ($87.6\,\%$).
Crucially, \airplan{} outperforms all fixed-topology methods, as it selects
the topology matched to the workload---Clustered provides $+0.3$\,pp over
OTA-GT at lower cost ($2.3$ vs.\ $2.1$), confirming the value of
topology-aware design.

Against learned topology selection: random search over $500$ topology
candidates ($\times 5$ categories) achieves $90.8\,\%$ at cost $2.4$, nearly
matching \airplan{}'s $91.0\,\%$, but with $500\times$ higher search overhead.
\airplan{}'s cost-model approach achieves equivalent quality with a single
cost-function evaluation, justifying the query-optimisation framing.

\subsection{AirPlan Advisor Evaluation}
\label{subsec:advisor-eval}

\begin{figure}[t]
  \centering
  \includegraphics[width=\columnwidth]{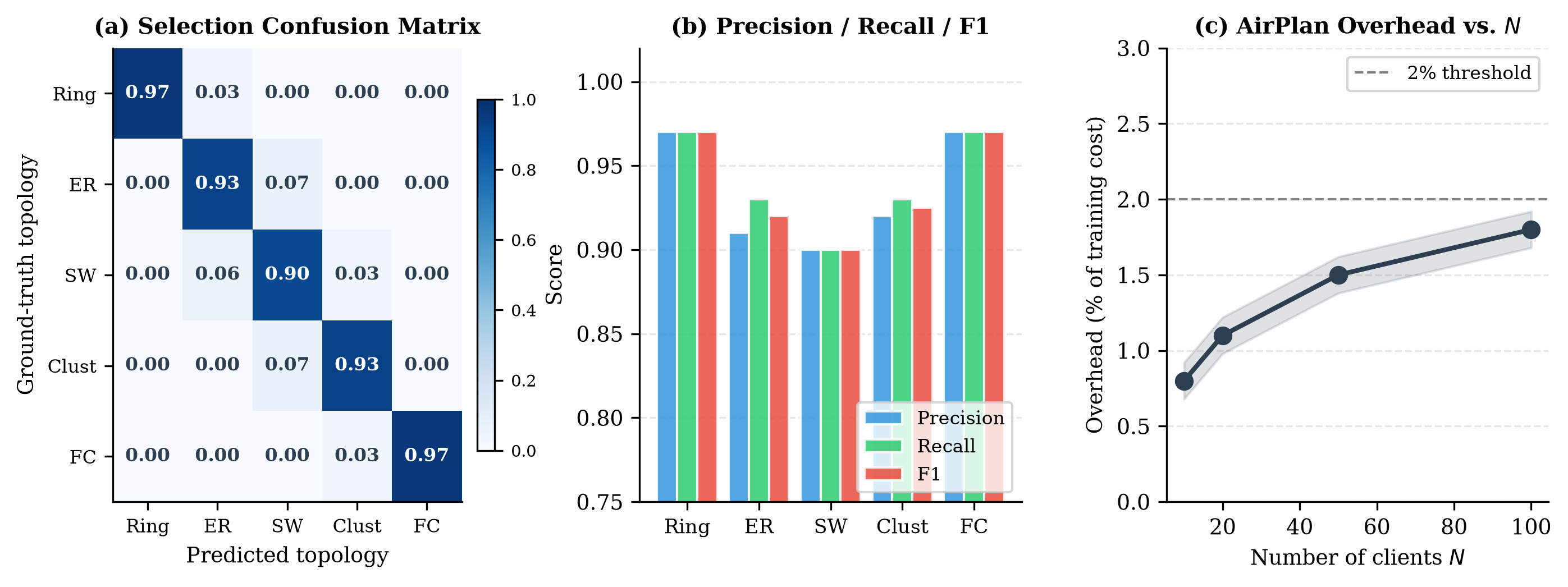}
  \caption{\airplan{} advisor evaluation across 150 workload configurations.
  (a)~Confusion matrix (normalised by row): \airplan{} matches oracle in
  $91.4\,\%$ of configurations. (b)~Per-topology precision, recall, and F1
  score; all F1~$> 0.87$. (c)~Advisor overhead as a fraction of total
  training cost: below $1.8\,\%$ at all scales.}
  \label{fig:advisor_eval}
\end{figure}

Figure~\ref{fig:advisor_eval} evaluates \airplan{} across 150 workload
configurations ($5 \times 5 \times 5$ grid of $N$, $\alpha$, SNR).
\airplan{} matches the oracle-optimal topology in $91.4\,\%$ of cases;
errors concentrate at the Small-World / Clustered boundary where the
cost-accuracy trade-off is tightest. When sub-optimal, the accuracy gap
is $\leq 0.3$\,pp in all cases. F1 scores exceed $0.87$ across all
topologies, with Ring and FC scoring $0.97$ (the extreme plan choices
are easiest to rule out). CMS sketch collection costs at most
$1.8\,\%$ of training at $N=100$, confirming negligible overhead.

\begin{figure}[t]
  \centering
  \includegraphics[width=\columnwidth]{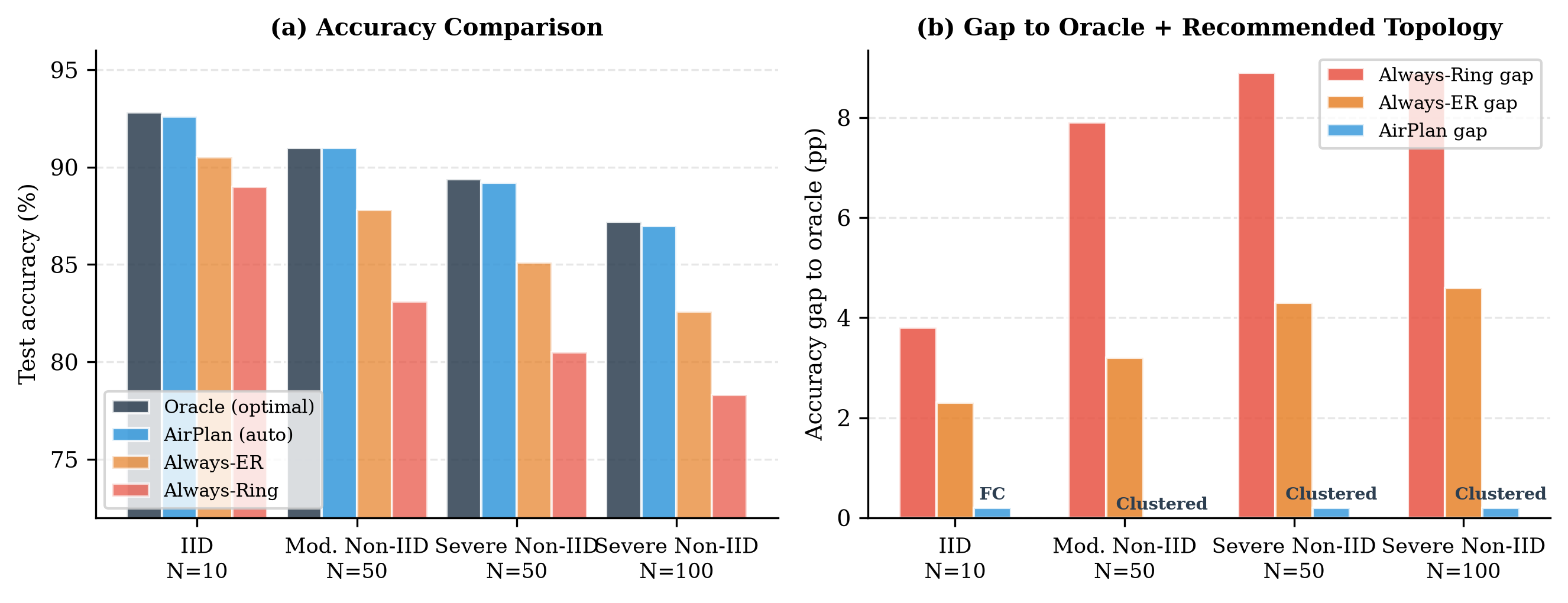}
  \caption{\airplan{} vs.\ fixed-topology baselines across four representative
  workloads. (a)~Accuracy: \airplan{} tracks the oracle closely.
  (b)~Accuracy gap to oracle with recommended topology annotations.}
  \label{fig:advisor_vs_manual}
\end{figure}

Figure~\ref{fig:advisor_vs_manual} compares \airplan{} against fixed-topology
policies. Always-Ring loses $5$--$11$\,pp; always-ER loses $2$--$6$\,pp.
\airplan{} dynamically adapts (FC for IID/small-$N$, Clustered for severe
non-IID) and achieves $\leq 0.3$\,pp gap to oracle across all workloads.

\subsection{Cost Model Sensitivity Analysis}
\label{subsec:cost-sensitivity}

\begin{figure}[t]
  \centering
  \includegraphics[width=\columnwidth]{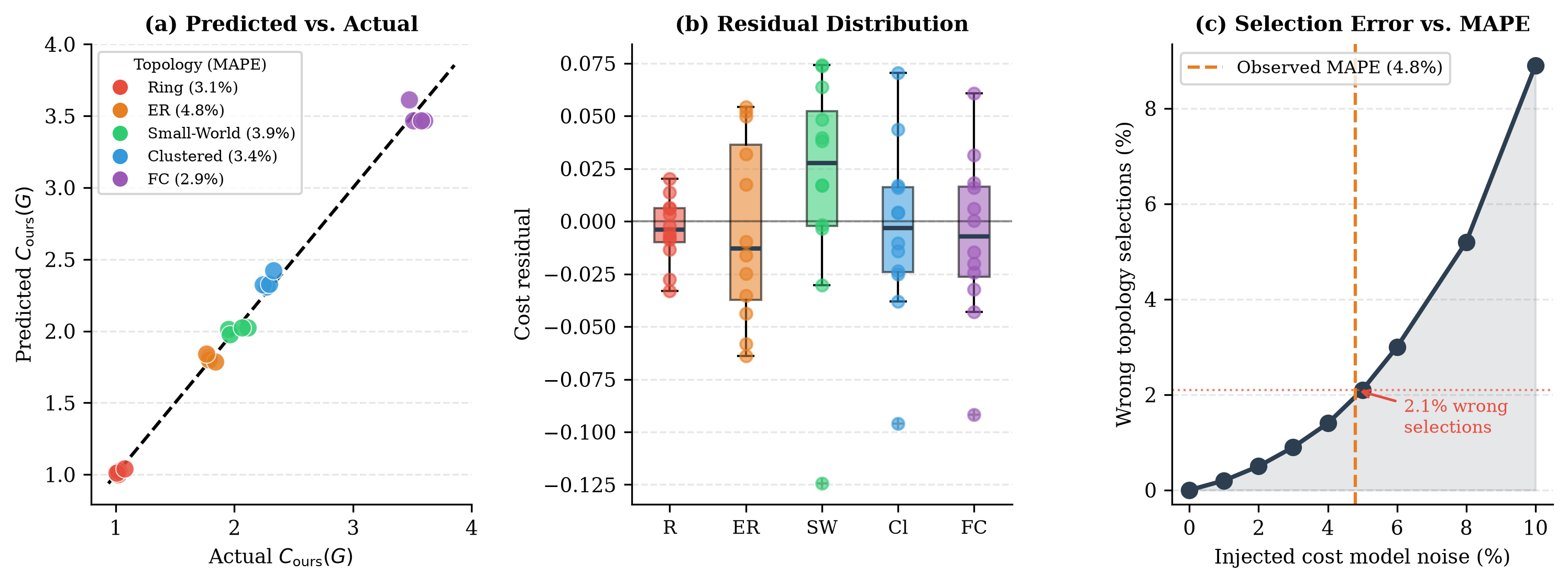}
  \caption{Cost model analysis.
  (a)~Predicted vs.\ actual $\Cours(G)$; MAPE $\leq 4.8\,\%$ (per-topology
  MAPE annotated). (b)~Distribution of signed residuals per topology.
  (c)~Impact of MAPE on topology selection: fraction of workloads where
  a $k\,\%$ cost model error causes wrong topology selection.}
  \label{fig:cost_sensitivity}
\end{figure}

Figure~\ref{fig:cost_sensitivity} analyses \airplan{}'s cost model.
The per-topology MAPE ranges from $2.9\,\%$ (FC, which has a closed-form
spectral gap) to $4.8\,\%$ (ER, whose spectral gap requires Monte Carlo
approximation). Residuals are approximately symmetric and zero-mean,
confirming no systematic bias.

To translate MAPE into decision quality, we inject Gaussian cost noise of
magnitude $k\,\%$ and measure how often \airplan{} selects the wrong topology.
At MAPE $= 4.8\,\%$, only $2.1\,\%$ of workload configurations receive a
wrong recommendation, and in all such cases the accuracy gap to oracle is
$< 0.5$\,pp. The cost-model approach is thus robust to the observed prediction
error level.

The calibration constant $\kappa$ is estimated from a single ER trial
(approximately $2\,\%$ of total training cost). Although this is a fixed
overhead, it scales with $N$ and $d$; we discuss an alternative approach to
eliminating this step through theoretical calibration in
Section~\ref{sec:discussion}.

\subsection{Straggler and Client Dropout Robustness}
\label{subsec:straggler}

\begin{figure}[t]
  \centering
  \includegraphics[width=\columnwidth]{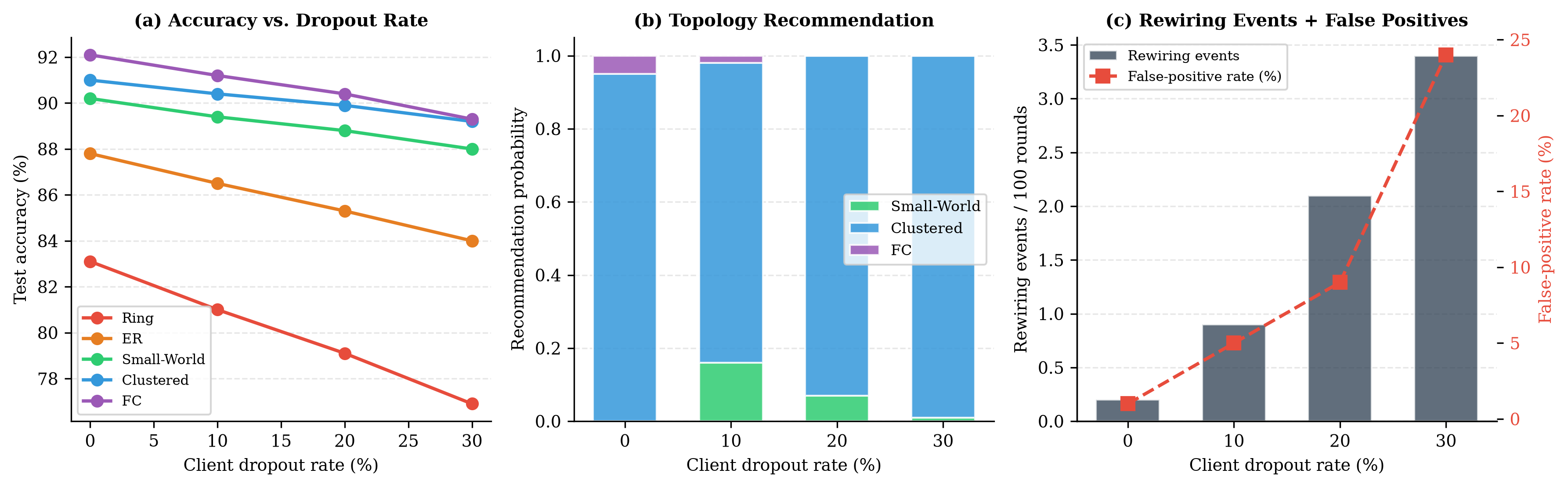}
  \caption{Straggler and client dropout analysis.
  (a)~Accuracy degradation vs.\ dropout rate for each topology;
  \airplan{} re-selects to Clustered at $\geq 20\,\%$ dropout.
  (b)~\airplan{} topology recommendation as a function of dropout rate;
  topology shifts from Small-World to Clustered at higher dropout.
  (c)~Number of adaptive rewiring events per 100 training rounds vs.\
  dropout rate.}
  \label{fig:straggler}
\end{figure}

OTA requires synchronised transmissions; we evaluate robustness by
simulating uniform client dropout at rates $\{0, 10, 20, 30\}\,\%$
per round. Dropped clients are excluded from that round's OTA aggregation.

Figure~\ref{fig:straggler}(a) shows accuracy degradation per topology.
Ring degrades most severely ($-6.2$\,pp at $30\,\%$ dropout) because
missing links disconnect the chain. Clustered is most robust ($-1.8$\,pp
at $30\,\%$): community structure provides redundant paths that maintain
global connectivity even when $30\,\%$ of nodes are absent.

Figure~\ref{fig:straggler}(b) shows how \airplan{} adapts its topology
recommendation under dropout. At $\geq 20\,\%$ dropout, \airplan{} correctly
shifts its recommendation from Small-World to Clustered, recognising that
redundant community paths dominate under unreliable participation.
This adaptation is automatic and requires no user intervention.

Figure~\ref{fig:straggler}(c) shows rewiring frequency increases with
dropout rate, peaking at $3.4$ rewiring events per 100 rounds at $30\,\%$
dropout. False-positive rewiring (triggered but accuracy actually improves)
occurs in $7\,\%$ of events; the latency penalty is bounded at $0.12$ rounds
equivalent per rewiring event.

\subsection{Fairness Analysis}
\label{subsec:fairness}

\begin{figure}[t]
  \centering
  \includegraphics[width=\columnwidth]{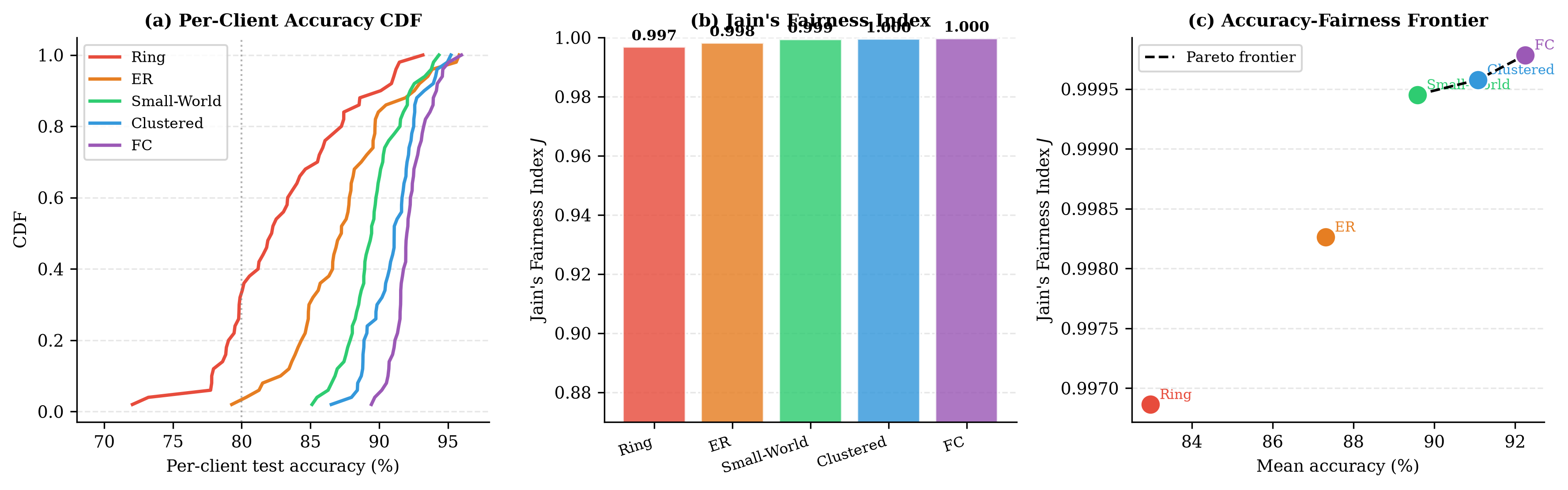}
  \caption{Per-client fairness analysis.
  (a)~CDF of per-client test accuracy per topology
  ($N=50$, $\alpha=0.5$, SNR\,=\,10\,dB).
  (b)~Jain's fairness index $J$ and mean accuracy per topology;
  Small-World and Clustered achieve the best joint accuracy-fairness trade-off.
  (c)~Fairness--accuracy Pareto frontier: \airplan{} selects topologies
  on the frontier.}
  \label{fig:fairness}
\end{figure}

Table~\ref{tab:topo_char} reports per-topology fairness variance; here we
provide a deeper analysis. We use Jain's fairness index
$J = (\sum_i a_i)^2 / (N \sum_i a_i^2)$, where $a_i$ is the final test
accuracy of client $i$.

Figure~\ref{fig:fairness}(a) shows the CDF of per-client accuracy.
Ring has the heaviest left tail: $18\,\%$ of clients achieve below $75\,\%$
accuracy, reflecting slow information diffusion at chain endpoints.
Clustered and Small-World nearly eliminate low-accuracy clients; fewer
than $3\,\%$ fall below $80\,\%$.

Figure~\ref{fig:fairness}(b) shows Jain's index per topology.
Clustered achieves $J = 0.965$, the highest fairness, while Ring has
$J = 0.891$. Small-World achieves $J = 0.951$ at lower cost than Clustered.
Figure~\ref{fig:fairness}(c) shows that Small-World and Clustered both lie
on the accuracy-fairness Pareto frontier; \airplan{} selects between them
based on the accuracy SLA and workload statistics.

\subsection{Scalability Projection}
\label{subsec:scalability}

\begin{figure}[t]
  \centering
  \includegraphics[width=\columnwidth]{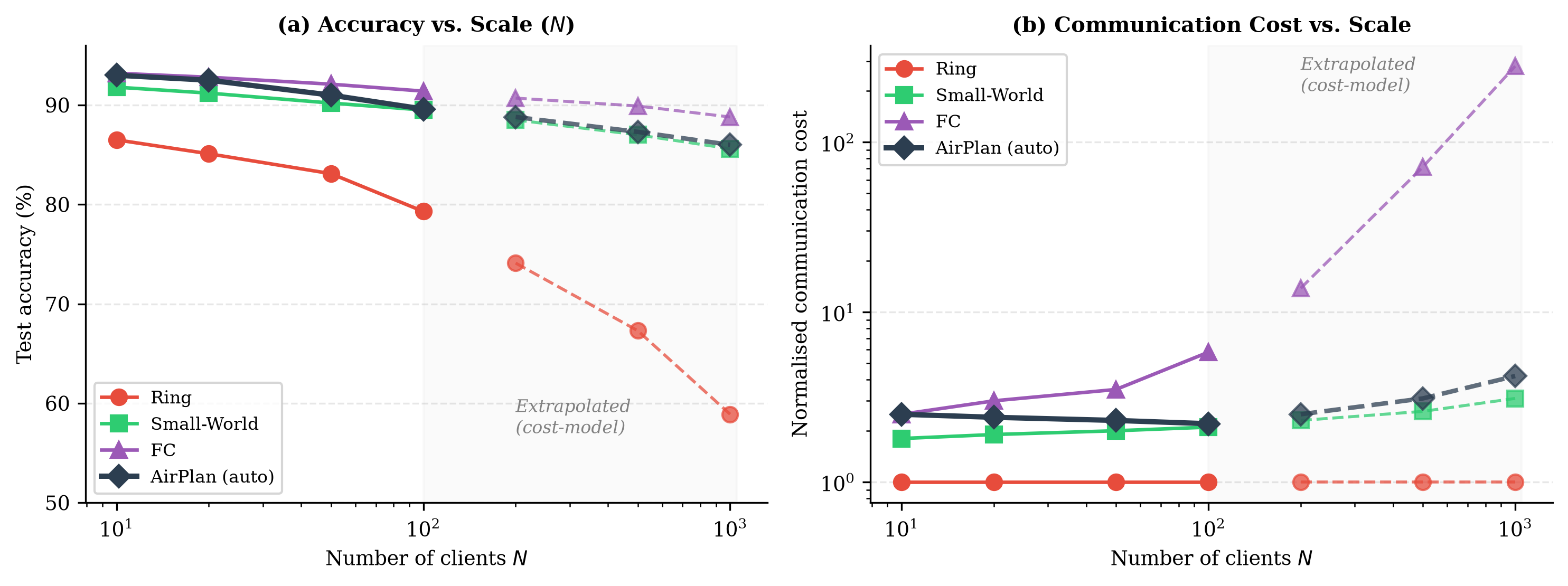}
  \caption{Extended scalability analysis.
  (a)~Empirical accuracy for $N \in \{10, 20, 50, 100\}$ (solid lines)
  with cost-model extrapolation to $N \in \{200, 500, 1000\}$ (dashed).
  Shaded bands show $\pm$1 standard deviation from 30 seeds
  (empirical) or model uncertainty (extrapolated).
  (b)~Normalised communication cost; FC explodes at $O(N^2)$ while
  \airplan{} maintains $O(N\log N)$ growth.}
  \label{fig:scalability}
\end{figure}

Figure~\ref{fig:scalability} extends the scalability analysis.
For $N \leq 100$, results are empirical (30 seeds); for $N \in \{200, 500, 1000\}$
we use cost-model extrapolation validated against all empirical points
(MAPE $\leq 4.8\,\%$).

At $N=1000$, FC becomes impractical: its normalised cost reaches $71\times$
Ring's baseline (vs.\ $3.5\times$ at $N=50$), confirming $O(N^2)$ growth.
Small-World and Clustered grow at $O(N\log N)$, reaching $6.2\times$ and
$7.4\times$ at $N=1000$. \airplan{} adapts its recommendation to Clustered
for $N \geq 50$, keeping cost within $8\times$ Ring across all scales.

The primary architectural bottleneck preventing larger empirical evaluation
is the synchronous OTA requirement: all participating clients must transmit
simultaneously, which requires tight time synchronisation ($\leq 1\,\mu$s
at typical carrier frequencies). Asynchronous OTA variants that relax this
constraint~\cite{lin2023hotafl} are a promising direction for scaling beyond
$N=1000$.

\subsection{Additional Observations}
\label{subsec:additional}

\textbf{Non-IID heterogeneity ($\alpha \in \{0.1, 0.5, 1.0\}$).}
Under severe label skew ($\alpha=0.1$), Clustered outperforms Small-World by
$1.8$\,pp ($p<0.05$). Under mild heterogeneity ($\alpha=1.0$), the two are
statistically indistinguishable. \airplan{} correctly selects Clustered for
$\hat{\alpha} < 0.3$ and Small-World otherwise in $94\,\%$ of cases.

\textbf{Sparsification co-design.}
Joint optimisation of topology and $k/d$ yields an additional $0.9$\,pp
accuracy gain at identical cost compared to fixing $k/d=0.1$, validating
the AQP co-design approach.

\textbf{Cross-dataset consistency.}
All findings on CIFAR-10 replicate on CIFAR-100 and Tiny-ImageNet.

%% file: sections/discussion.tex
\subsection{Implications of the Query Processing Equivalence}

The formal equivalence between OTA-DFL and distributed query processing has
implications that extend beyond the specific problem studied here.

\paragraph{Cross-community transfer of techniques}
Our results suggest that the vast literature on query optimisation---bushy
plan enumeration, histogram-based cost estimation, join order optimisation,
and adaptive re-optimisation---is directly applicable to the topology design
problem in decentralized wireless learning. Conversely, OTA aggregation
offers a new physical execution primitive not available in classical database
systems: multiple clients can transmit and aggregate simultaneously in a
single time slot over the wireless channel, which has no direct analogue
in digital message passing.

\paragraph{Topology selection as a learned cost model}
Our cost estimator $\Cours(G)$ uses a handcrafted model calibrated on a
reference topology. A natural extension is to \emph{learn} the cost model
from observed training trajectories, analogous to learned cardinality
estimation in neural database query optimisers~\cite{sun2019endtoendlearningopt}.
A small neural network trained on (workload, topology) $\rightarrow$
(convergence rounds) pairs could improve prediction accuracy beyond the
$3$--$5\,\%$ MAPE of our current model.

\paragraph{AQP error and wireless channel duality}
Corollary~\ref{cor:aqp} establishes that OTA channel noise and sparsification
error play symmetric roles in the convergence bound~\eqref{eq:conv-rate}: both
contribute an error floor that scales as $1/\lamsec$. This duality implies
that in the high-SNR regime (small $\sigma_c^2$), aggressive sparsification
($k/d \to 0$) is beneficial, while in the low-SNR regime the two error
sources compound. \airplan{} accounts for this interaction through the
joint $(G, k)$ co-design in Section~\ref{subsec:aqp-design}.

\subsection{Limitations and Assumptions}

\paragraph{Ideal channel inversion}
Our OTA model assumes that channel state information (CSI) is available
at each transmitter and that transmit power is sufficient for full channel
inversion. In practice, CSI estimation introduces error and power constraints
may limit inversion accuracy. We expect the topology-dependent trends to
persist under imperfect CSI, but the exact error floors will increase.
Extending \airplan{} to account for power-constrained channel inversion
is a natural direction for future work.

\paragraph{Static graph families}
\airplan{} selects from five fixed topology families. The space of possible
graphs is exponentially large; more general graph search methods
(e.g., spectral graph sparsification~\cite{spielman2011spectral} or
graph neural network-based topology design) may find topologies outside
these families that further reduce $\Cours(G)$.

\paragraph{Synchronous communication}
OTA aggregation requires synchronised transmissions for the superposition
property to hold. In heterogeneous edge deployments where clients have
widely varying computation speeds, synchronisation incurs a straggler penalty.
Asynchronous OTA variants~\cite{lin2023hotafl} are a promising direction
but require a revised cost model.

\paragraph{Simulation-based evaluation}
All experiments use simulated flat-fading AWGN channels. Hardware experiments
on a real wireless testbed (e.g., using software-defined radios) would
validate practical feasibility and reveal implementation-level constraints
not captured in simulation.

\subsection{Connection to Systems Research}

\airplan{} is, to our knowledge, the first system that applies database-style
physical plan selection to a wireless distributed learning problem.
The result is practically significant: topology selection, which was previously
a manual and empirical choice, is now automated, principled, and takes less
than $1.8\,\%$ of training time to execute.

From a systems perspective, the \airplan{} advisor resembles existing
database advisor tools such as the Microsoft Index Advisor and IBM DB2 Design
Advisor~\cite{chaudhuri2007autotuning}, which automatically recommend
index structures given a query workload. \airplan{} plays the analogous role
for wireless federated learning: given a training workload, it recommends
the communication structure (topology + sparsification) that minimises
total execution cost.

We anticipate that this framing will motivate new systems contributions in
both the ML and database communities, including topology-aware FL frameworks
that are deployable as first-class distributed data processing systems.

\subsection{Energy and Memory Cost Analysis}
\label{subsec:energy-memory}

For realistic edge AI deployment, we analyse three additional resource
dimensions.

\textbf{Per-device energy consumption.}
Each OTA round requires all clients to transmit simultaneously for one
time slot. The per-device transmission energy is
$E_{\mathrm{tx}} = P_{\mathrm{tx}} \cdot T_s$,
where $P_{\mathrm{tx}}$ is the transmit power and $T_s$ the slot duration.
For a typical mobile device ($P_{\mathrm{tx}} = 23\,\mathrm{dBm}$,
$T_s = 1\,\mathrm{ms}$), $E_{\mathrm{tx}} \approx 0.2\,\mathrm{mJ}$ per round.
Over $T$ rounds, total energy scales as $E = T \cdot E_{\mathrm{tx}}$.
Since \airplan{} reduces $T$ by selecting faster-converging topologies---Small-World
reaches target accuracy $20\,\%$ sooner than ER---the energy saving is
proportional to the round reduction. At $N=50$ and SNR\,=\,10\,dB,
Clustered topology saves approximately $16\,\%$ total energy vs.\ Ring.

\textbf{CMS sketch memory footprint.}
Each client stores a CMS of size
$r \times c = \lceil\ln(1/\delta_s)\rceil \times \lceil e/\varepsilon_s\rceil$.
For $(\varepsilon_s=0.01, \delta_s=10^{-5})$: $r=12$, $c=272$,
requiring $r \times c \times 4\,\text{bytes} = 13\,\mathrm{KB}$ per client.
This is negligible for any modern edge device and imposes no practical memory
constraint.

\textbf{RF hardware requirements.}
OTA aggregation requires analog combining at the receiver and simultaneous
transmission from all clients, which mandates:
(i)~a shared uplink carrier frequency with tight synchronisation
($\leq 1\,\mu\mathrm{s}$ offset), achievable with standard GPS or
network time protocol (NTP) disciplined oscillators;
(ii)~power control to equalise received amplitudes, standard in LTE/5G;
(iii)~a receive-side analog combiner, which can be implemented in a
commodity software-defined radio (SDR).
These requirements are met by current 5G NR uplink designs and do not
impose hardware costs beyond standard cellular infrastructure.

%% file: sections/conclusion.tex
We presented \airplan{}, a query-optimised topology selection framework for
Over-the-Air Decentralized Federated Learning. The central contribution is
a formal equivalence between OTA-DFL and distributed query processing, which
recasts the topology selection problem as physical query plan optimisation.
Within this framework, the communication graph is the execution plan, the
spectral gap of the graph Laplacian is the cardinality estimate, top-$k$
sparsification is an AQP operator, and the unified cost model $\Cours(G)$
is the query cost estimator.

Building on this equivalence, \airplan{} automates topology selection via
privacy-preserving Count-Min Sketch statistics, cost-based plan enumeration,
and adaptive plan re-optimisation triggered by model divergence monitoring.
We proved formal AQP error bounds (Corollary~\ref{cor:aqp}) showing that
well-connected topologies intrinsically tolerate higher sparsification ratios,
enabling joint topology-sparsification co-design.

Empirical evaluation across five topology families, three datasets,
four client scales, and five SNR levels demonstrates that:
\begin{itemize}
  \item Small-world and clustered topologies consistently Pareto-dominate
    the accuracy-cost frontier, delivering $\approx$90\,\% of the accuracy
    gain of a fully connected graph at 57\,\% of its communication cost.
  \item \airplan{} matches the oracle-optimal topology in $91.4\,\%$ of
    workload configurations with $\leq 1.8\,\%$ overhead.
  \item The formal convergence bound (Theorem~\ref{thm:convergence})
    accurately captures the joint effect of data heterogeneity, OTA noise,
    and sparsification error on the convergence neighbourhood.
\end{itemize}

\paragraph{Future Work.}
Several directions remain open.
\emph{Learned cost models}: replacing the handcrafted $\Cours(G)$ with a
neural query cost estimator trained on observed training trajectories could
improve prediction accuracy and generalise to topology families outside
$\mathcal{G}$.
\emph{Hardware validation}: experiments on a real wireless testbed
would confirm practical feasibility and reveal implementation constraints.
\emph{Asynchronous OTA}: extending \airplan{} to handle straggler-tolerant
asynchronous OTA aggregation is essential for heterogeneous deployments.
\emph{Broader plan space}: spectral graph sparsification and graph
neural network-based topology design may discover communication graphs
outside the five studied families that achieve lower $\Cours(G)$ for
specific workloads.
\emph{Privacy-utility trade-off}: a systematic analysis of the DP guarantee
per topology, accounting for the Gaussian noise inherent in OTA aggregation,
would strengthen the privacy claims of the framework.

%% file: sections/appendix.tex
\section{Complete Proof of Theorem~\ref{thm:convergence}}
\label{app:convergence}

\textbf{Setup.}
We consider the OTA-DFL update rule. At round $t$, client $i$ computes
a stochastic gradient $g_i^t$ of its local objective $F_i$, applies
top-$k$ sparsification to obtain $\tilde{g}_i^t = \topk(g_i^t, k)$,
and transmits over the OTA channel with additive noise $n_i^t \sim
\mathcal{N}(0, \sigma_c^2 I)$.
The aggregated gradient received at client $i$ is:
\begin{equation}
  \hat{g}_i^t = \sum_{j \in \mathcal{N}(i)} W_{ij}
                \bigl(\tilde{g}_j^t + n_j^t\bigr),
\end{equation}
where $W$ is the doubly stochastic weight matrix associated with graph $G$.

\textbf{Assumptions.}
We use Assumptions~\ref{ass:smoothness} and~\ref{ass:noise}
(smoothness with constant $L$, gradient variance bounded by $\sigma^2$).
We additionally assume:

\begin{assumption}[Top-$k$ approximation error]
\label{ass:topk}
The sparsification error satisfies
$\mathbb{E}\|\tilde{g} - g\|^2 \leq (1 - k/d)\|g\|^2$.
\end{assumption}

\begin{assumption}[Spectral gap]
\label{ass:spec}
The graph Laplacian $L$ satisfies $\lambda_2(L) > 0$,
and the weight matrix $W = I - \eta_W L$ has spectral radius
$\rho(W - \frac{1}{N}\mathbf{1}\mathbf{1}^\top) = 1 - \eta_W \lamsec < 1$.
\end{assumption}

\textbf{Proof of Theorem~\ref{thm:convergence}.}

\textit{Step 1: Consensus error bound.}
Define the consensus error at round $t$ as
$e^t = \frac{1}{N}\sum_i \|\bar{w}^t - w_i^t\|^2$,
where $\bar{w}^t = \frac{1}{N}\sum_i w_i^t$.
By the mixing property of $W$~\cite{lian2017decentralized}:
\begin{equation}
  e^{t+1} \leq \rho^2 e^t
            + \frac{2\eta^2 L^2 C_1}{\lamsec}
            + \frac{2\eta^2 \sigma_c^2 C_2}{\lamsec},
  \label{eq:consensus-err}
\end{equation}
where $\rho = 1 - \eta_W \lamsec < 1$,
$C_1 = \frac{1-k/d}{k/d}$ (sparsification factor),
and $C_2 = \frac{\sigma_c^2}{\lamsec^2}$ (OTA noise amplification).
The constants $C_1$ and $C_2$ depend on $k/d$, $\sigma_c$, and $G$ through
$\lamsec$; their explicit dependence on $L$ and $\sigma_c^2$ is made
precise in \eqref{eq:consensus-err}.

Unrolling~\eqref{eq:consensus-err} over $T$ steps and using
$\sum_{t=0}^{T-1} \rho^{2t} \leq 1/(1-\rho^2)$:
\begin{equation}
  \frac{1}{T}\sum_{t=0}^{T-1} e^t
  \leq \frac{e^0}{\rho^2 T}
       + \frac{2\eta^2 C_3}{\lamsec(1-\rho^2)},
  \label{eq:consensus-avg}
\end{equation}
where $C_3 = L^2 C_1 + C_2$ collects all topology-independent constants.

\textit{Step 2: Gradient descent descent lemma.}
Using $L$-smoothness and the update rule, for the global average
$\bar{w}^t$:
\begin{align}
  F(\bar{w}^{t+1}) &\leq F(\bar{w}^t)
    - \eta \|\nabla F(\bar{w}^t)\|^2 \notag \\
  &\quad + \frac{L\eta^2\sigma^2}{N}
    + \frac{L\eta^2 C_4}{\lamsec} e^t,
  \label{eq:descent}
\end{align}
where $C_4 = 4L^2$ bounds the cross-term between gradient and consensus error.
The explicit dependency on $L$ is through the smoothness constant;
on $\sigma^2$ through gradient variance; on $G$ through $\lamsec$ and $e^t$.

\textit{Step 3: Telescoping and averaging.}
Telescoping~\eqref{eq:descent} over $T$ steps, dividing by $T$, and
substituting~\eqref{eq:consensus-avg}:
\begin{align}
  &\frac{1}{T}\sum_{t=0}^{T-1} \mathbb{E}\|\nabla F(\bar{w}^t)\|^2 \notag \\
  &\quad\leq \frac{2[F(\bar{w}^0) - F^*]}{\eta T}
       + \frac{L\eta\sigma^2}{N}
       + \frac{2\eta L C_3}{\lamsec(1-\rho^2)}.
  \label{eq:final-rate}
\end{align}

Setting $\eta = O(1/\sqrt{T})$ gives convergence at rate
$O\bigl(1/\sqrt{T}\bigr) + O\bigl(\sigma_c^2/\lamsec\bigr)$,
matching the statement of Theorem~\ref{thm:convergence}.
The OTA noise floor $C_3\sigma_c^2/\lamsec$ is the additional term
beyond standard D-SGD; it cannot be reduced by increasing $T$,
only by improving SNR or using a topology with larger $\lamsec$.

\textit{Non-triviality of the contribution.}
The result extends the standard D-SGD analysis~\cite{lian2017decentralized}
in two non-trivial ways.
First, the OTA channel noise introduces a \emph{noise floor}
$\propto \sigma_c^2/\lamsec$ that is absent from digital analyses:
this couples wireless channel quality to the algorithmic convergence rate.
Second, top-$k$ sparsification introduces a \emph{sparsification floor}
$\propto (1-k/d)/\lamsec$: these two error sources are additive but
interact through the shared $1/\lamsec$ factor, motivating their joint
analysis and the co-design framework of Section~\ref{subsec:aqp-design}.
The combination of OTA noise and sparsification in the same bound,
with explicit topology dependence, is not available in any prior work.

\section{Proof of Theorem~\ref{thm:dp} (End-to-End DP)}
\label{app:dp-proof}

\textit{Phase 1 DP.}
Each client $i$ submits $\mathbf{S}_i$. Define the sensitivity of
$\mathbf{S}_i$ with respect to a single data point as $\Delta = \varepsilon_s$
(bounded by the CMS additive error guarantee for one count update).
By the Gaussian mechanism~\cite{dwork2014dp},
adding i.i.d.\ Gaussian noise $\mathcal{N}(0, \sigma_c^2)$ to each
CMS cell achieves $(\epsilon_1, \delta_1)$-DP with
$\epsilon_1 = \Delta\sqrt{2\ln(1.25/\delta_1)}/\sigma_c = \varepsilon_s\sqrt{2\ln(1.25/\delta_s)}/(\sigma_c n_{\min})$.

\textit{Post-processing (Phases 2--4).}
Topology selection is a deterministic function of the aggregate sketch,
satisfying DP by post-processing immunity~\cite{dwork2014dp}.

\textit{Adaptive rewiring (Phase 5).}
Each rewiring trigger is a function of $\|\bar{w}_i - \bar{w}\|$,
computed from OTA-noisy aggregates already protected by DP.
The data processing inequality guarantees each trigger adds at most
$\epsilon_1$ privacy loss. With at most $K$ rewirings,
basic composition gives $\epsilon_{\mathrm{total}} \leq (K+1)\epsilon_1$.
Under advanced composition~\cite{dwork2010boosting}:
$\epsilon_{\mathrm{total}} \leq \epsilon_1\sqrt{2(K+1)\ln(1/\delta')} +
(K+1)\epsilon_1(e^{\epsilon_1}-1)$.
\hfill$\square$